
\documentclass[conference,letterpaper]{IEEEtran}

\addtolength{\topmargin}{9mm}

%
%
\usepackage[utf8]{inputenc}
\usepackage[T1]{fontenc}
\usepackage{url}
\usepackage{ifthen}
\usepackage{cite}
\usepackage[cmex10]{amsmath} 
\usepackage[mathscr]{eucal}
\usepackage{amsmath,amsthm,amssymb,bm,bbm,dsfont,xfrac,braket,cases,comment}
\usepackage{flushend}
\usepackage{thmtools}
\usepackage{thm-restate}
\usepackage{mathtools}\mathtoolsset{centercolon}
\mathtoolsset{showonlyrefs}
\usepackage[usenames,dvipsnames]{color}
\usepackage[draft]{hyperref}
\hypersetup{
	unicode=false,          
	pdftoolbar=true,        
	pdfmenubar=true,        
	pdffitwindow=false,     
	pdfstartview={FitH},    
	pdftitle={My title},    
	pdfauthor={Author},     
	pdfsubject={Subject},   
	pdfcreator={Creator},   
	pdfproducer={Producer}, 
	pdfkeywords={keyword1} {key2} {key3}, 
	pdfnewwindow=true,      
	colorlinks=true,        
	linkcolor=Red,          
	citecolor=ForestGreen,  
	filecolor=Magenta,      
	urlcolor=BlueViolet,    
}
\usepackage{doi}
\usepackage{url}
\usepackage{caption, subcaption}
\usepackage{enumitem}
\usepackage{cite}


\interdisplaylinepenalty=2500 

\DeclareMathOperator{\Tr}{Tr}

\DeclareMathOperator{\e}{\mathrm{e}}

\newcommand{\be}{{\mathbf e}}

\newcommand{\tr}{\operatorname{Tr}}
\newcommand{\al}{{\alpha}}

\def\0{{\mathbf{0}}}
\def\1{{\mathbf{1}}}
\def\2{{\mathbf{2}}}
\def\3{{\mathbf{3}}}
\def\4{{\mathbf{4}}}
\def\5{{\mathbf{5}}}
\def\6{{\mathbf{6}}}

\def\7{{\mathbf{7}}}
\def\8{{\mathbf{8}}}
\def\9{{\mathbf{9}}}


\def\be{\begin{equation}}
	\def\ee{\end{equation}}
\def\bea{\begin{eqnarray}}
	\def\eea{\end{eqnarray}}



\theoremstyle{plain}
\newtheorem{theo}{Theorem} 
\newtheorem{prop}[theo]{Proposition} 
\newtheorem{lemm}[theo]{Lemma} 

\theoremstyle{definition}
\newtheorem{defn}[theo]{Definition} 

\theoremstyle{remark}
\newtheorem{remark}{Remark}[section]

\hyphenation{op-tical net-works semi-conduc-tor}

\begin{document}
\title{{Privacy Amplification Against Quantum Side Information Via Regular Random Binning}}


\author{%
  \IEEEauthorblockN{Yu-Chen Shen$^{1}$, Li Gao$^{2}$, and Hao-Chung Cheng$^{1,3-6}$}
  \IEEEauthorblockA{$^1$Department of Electrical Engineering
    National Taiwan University, Taipei 10617, Taiwan (R.O.C.)\\
    $^{2}$Department of Mathematics,  University of Houston, Houston, TX 77204, USA\\
    $^{3}$Department of Mathematics,
  National Taiwan University, Taipei 10617, Taiwan (R.O.C.)\\
  $^{4}$Center for Quantum Science and Engineering,  National Taiwan University, Taipei 10617, Taiwan (R.O.C.)\\
  $^{5}$Physics Division, National Center for Theoretical Sciences, Taipei 10617, Taiwan (R.O.C.)\\
  $^{6}$Hon Hai (Foxconn) Quantum Computing Research Center, New Taipei City 236, Taiwan (R.O.C.)
	}
}


\maketitle

\begin{abstract}

We consider privacy amplification against quantum side information by using regular random binning as an effective extractor. 
For constant-type sources, we obtain error exponent and strong converse bounds in terms of the so-called quantum Augustin information. Via type decomposition, we then recover the error exponent for independent and identically distributed sources proved by Dupuis [arXiv:2105.05342].
As an application, we obtain an achievable secrecy exponent for classical-quantum wiretap channel coding in terms of the Augustin information, which solves an open problem in
[IEEE Trans.~Inf.~Theory, 65(12):7985, 2019].
Our approach is to establish an operational equivalence between privacy amplification and quantum soft covering; this may be of independent interest.
\end{abstract}

\section{Introduction}
\emph{Privacy amplification} (also called \emph{randomness extraction} in \cite{SIAM17})
is a vital protocol in classical and quantum cryptography for extracting randomness from a source partially leaked to environment \cite{ILL89, Ren05, WH13, Hay13, PR14, HW16, Hay16, Tom16, Dup21, HT15, Hay12_}.
Privacy amplification has been widely studied for its applications in security bounds (known as the leftover hash lemma) \cite{Ren05, TSS+11, Hay13, Tom16, Dup21}, random number generation \cite{Hay13},  wiretap channel coding \cite{Hay13,Hay11,Hay2112}, and others\cite{IEEE64,IEEE57,Proc465,Proc467,Tsurumaru2021EquivalenceOT}.
Much progress has been made on characterizations of the information leakage in privacy amplification. The achievability part has been established under various security criteria such as trace distance \cite{Ren05,MH333,Tom16,Dup21}, purified distance \cite{TH13,LYH23}, and the quantum relative entropy \cite{LYH23,YW66}.
Converse studies were shown under purified distance \cite{TH13,SD22,LYH23} and trace distance \cite{SGC22a}. 

For privacy amplification against quantum adversaries, suppose Alice and Bob (adversary) share a classical-quantum (c-q) state $\rho_{XB} := \sum_{x\in\mathcal{X}} p_X(x)\ket{x}\bra{x}\otimes \rho^x_B$, where Alice holds the classical system $X$ and Bob holds the quantum system $B$.
To each source $x\in\mathcal{X}$ held at Alice, Bob accesses to a density operator $\rho_B^x$ (i.e.~positive semi-definite operator with unit trace) as the quantum side information.
The goal of privacy amplification is to extract uniform randomness, say on system $Z$, such that it is independent of Bob's system
The common method of privacy amplification is for Alice to apply a random hash function on her system. In this paper, we consider the \emph{regular random binning function} \cite{Yas14} as the hash function defined as follows:
\begin{defn}[Regular random binning function]
A regular random binning function $h: \mathcal{X}\to \mathcal{Z}$ is a randomly chosen $k$-to-$1$ function, where $k = \frac{|\mathcal{X}|}{|\mathcal{Z}|}$ in this case. 
\end{defn}
We denote the implementation of the regular random binning $h$ on system $X$ by the following linear operation $\mathcal{R}^h_{X\to Z}$:
\begin{align}
    \mathcal{R}^h(\rho_{XB}) &:= \sum_{x\in\mathcal{X}} p_X(x)|h(x)\rangle \langle h(x)|\otimes \rho_B^x\\
    &= \sum_{z\in\mathcal{Z}} |z\rangle\langle z|\otimes \sum_{x\in h^{-1}(z)} p_X(x)\rho_B^x.
\end{align}
We adopt the \emph{trace distance} between resulted state and the ideal state as the security criterion:
\begin{align}
\frac{1}{2}\mathds{E}_h\left\| \mathcal{R}^h(\rho_{XB}) - \frac{\mathds{1}_Z}{|\mathcal{Z}|} \otimes \rho_B \right\|_1.
\end{align}

In this paper, we consider privacy amplification for the \emph{constant-type sources}.
Let $n\in\mathds{N}$ be a positive integer and let probability distribution $p$ to be an $n$-type satisfying $np(x) \in 0\cup\mathds{N}$ for every $x\in\mathcal{X}$.
The constant-type source contains equiprobable sequences of the same empirical distribution on $\mathcal{X}^n$, i.e.
\begin{align}
\Breve{p}_{X^n}(x^n) = \begin{cases} \frac{1}{|T^n_p|}, & \mbox{if }x^n \in T^n_p\\ 0. & \mbox{else } \end{cases}.\label{typeprob}
\end{align}
Here the \emph{type class} $T^n_p$ is defined later in \eqref{eq:type}.
The joint c-q state of type $p$, denoted as $\Breve{\rho}_{X^nB^n}$, is then
\begin{align} \label{eq:const-type}
\Breve{\rho}_{X^nB^n}= \sum_{x^n\in T^n_p} \frac{1}{|T^n_p|}\ket{x^n}\bra{x^n}\otimes \rho^{x^n}_{B^n},
\end{align}
where $\rho^{x^n}_{B^n}:= \otimes_{i=1}^n \rho_B^{x_i}$.
To extract randomness from $\Breve{\rho}_{X^nB^n}$, Alice applies the random binning function $\mathcal{R}^{h^n}: T^n_p \to \mathcal{Z}^n$ on her system $X^n$. The secrecy measure is then
\begin{align}
d_{\text{PA}}(\Breve{\rho}_{X^nB^n}, R_{\text{PA}}) := \frac{1}{2}\mathds{E}_{h^n}\left\|\mathcal{R}^{h^n}(\Breve{\rho}_{X^nB^n})- \frac{\mathds{1}_{Z^n}}{|\mathcal{Z}^n|}\otimes \Breve{\rho}_{B^n}\right\|_1,
\end{align}
where $R_{\text{PA}} = \frac1n\log{|\mathcal{Z}^n|}$ is the rate of the extracted randomness and $\Breve{\rho}_{B^n}$ is the marginal state of $\Breve{\rho}_{X^nB^n}$ on system $B^n$.

On the other hand, \emph{soft covering} originates from Wyner \cite{Wyn75-1} for studying common information. It is a useful tool for proving security of wiretap channel coding \cite{MJ13}, or doing other secrecy analysis and coding problems, e.g.~\cite{TS93, SM14}. In the quantum setting, soft covering has been studied in the context of compression and channel simulation \cite{DW03, DW05, LD09}. 

In this paper, we consider quantum soft covering via \emph{constant composition random codebook}.
Fix an product c-q channel $x^n \mapsto \rho_{B^n}^{x^n}$.
Given a constant-type source  $\breve{p}_{X^n}$ defined in \eqref{typeprob} as the prior, 
the induced  state at channel output is
\begin{align}
\breve{\rho}_{B^n} = \sum_{x^n\in T^n_p}\frac{1}{|T^n_p|}\rho^{x^n}_{B^n},
\end{align}
which is the marginal state of  $\Breve{\rho}_{X^nB^n}$ in \eqref{eq:const-type}.
The task of \emph{quantum soft covering} is to approximate $\breve{\rho}_{B^n}$ via certain random codebook.
In this paper, we consider a random codebook $\Breve{C}^n$ in which \emph{distinct} codewords are chosen uniformly from the type class $T^n_p$. 
We termed this the (constant-type) \emph{quantum soft covering without repetition}.

\begin{defn}[Constant composition random codebook without repetition] \label{defn:sc}
For any type class $T^n_p$ and integer $M\in\mathds{N}$, the random codebook $\Breve{\mathcal{C}}^n$ without repetition with cardinality $M$ obeys:
\begin{align} \label{eq:prop_constant_type}
&\Pr\left[ \Breve{\mathcal{C}}^n = \left\{ x^n(1),\ldots,x^n(M) \right\} \right] \\
&=
\begin{cases}
\tfrac{ (|\mathcal{X}| - M )! }{ |\mathcal{X}| {!}   } & x^n(i), x^n(j) \in T^n_p, x^n(i)\neq x^n({j}), \forall\, i\neq j \\
0 & \text{otherwise}
\end{cases}.
\end{align}
\end{defn}

Via the codebook ${\Breve{\mathcal{C}}^n}$ 
in Definition~\ref{defn:sc}, the codebook-induced state is $\rho^{\Breve{\mathcal{C}}^n}_{B^n} = \frac{1}{|\Breve{\mathcal{C}}^n|}\sum_{x^n\in\Breve{\mathcal{C}}^n}\rho^{x^n}_{B^n}$.
We use the trace distance between the approximated state $\rho^{\Breve{\mathcal{C}}^n}_{B^n}$ and the true marginal state $\breve{\rho}_{B^n}$ to quantify the error of approximation:
\begin{align}
d_{\text{SC}}(\Breve{\rho}_{X^nB^n}, R_{\text{SC}}) := \frac{1}{2} \mathds{E}_{\Breve{\mathcal{C}}^n}\left\|\rho^{\Breve{\mathcal{C}}^n}_{B^n} - \Breve{\rho}_{B^n}\right\|_1,
\end{align}
where $R_{\text{SC}} := \frac1n \log|\Breve{\mathcal{C}}^n|$,
and the expectation is with respect to \eqref{eq:prop_constant_type}.

The main goal of this paper is to establish achievability and strong converse for privacy amplification on constant-type sources. 
Note that the achievability has been studied  by 
Mojahedian \textit{et al.} \cite{Bei18} assuming that the side information is classical (i.e.~all $\{\rho_B^x\}_{x\in\mathcal{X}}$ mutually commute).
We then aim at the scenario of quantum side information to address the open problems in \cite{Bei18}.
To that end, we first prove an \emph{operational equivalence} between privacy amplification using regular random binning and quantum soft covering without repetition (Theorem~\ref{equiv}):
\begin{align}\label{eq:equ}
d_{\text{PA}}(\Breve{\rho}_{X^nB^n}, R) = d_{\text{SC}}\left(\Breve{\rho}_{X^nB^n}, \sfrac{\log{|T^n_p|}}{n}- R\right).
\end{align}
Next, we establish an achievability bound and a strong converse bound for quantum soft covering without repetition (Theorems~\ref{scdirect} and \ref{scconverse}). 
Hence, by applying those bounds and using the equivalence in \eqref{eq:equ}, 
we obtain the following results for privacy amplification on constant-type sources (Theorems~\ref{PAdirect} and \ref{PAconverse}):
\begin{align}
\begin{cases} 
d_{\text{PA}}(\Breve{\rho}_{X^nB^n},R)& \dot{\leq} \e^{-n \frac{\alpha-1}{\alpha}\left(H(p) - \Breve{I}^*_{\alpha}(X:B)_{\rho}- R \right)}\\
1 - d_{\text{PA}}(\Breve{\rho}_{X^n B^n},R) &\dot{\leq}\e^{-n\frac{1-\alpha'}{\alpha'}\left( R - H(p) + \Breve{I}^{\uparrow}_{2-\sfrac{1}{\alpha'}}(X:B)_\rho \right)}
\end{cases}
\end{align}
for any $\alpha \in (1,2)$ and $\alpha' \in (\sfrac{1}{2},1)$. 
Here, ``$\dot{\leq} $" means that the inequality is up to a polynomial prefactor in $n$ and $H(p):=-\sum_x p(x)\log p(x)$ is the Shannon entropy.
The established exponents are expressed by the \emph{Augustin-type information} \cite{DW14, MO18, CGH18} defined later in \eqref{eq:Sand_Augustin_info} and \eqref{eq:Petz_Augustin_info}.
The achievability (Theorem~\ref{PAdirect}) then generalizes \cite[Theorem 25]{Bei18} by Mojahedian \textit{et al.}~to the scenario of quantum side information.

Our results have several implications.
Firstly, the above bounds show that the \emph{quantum conditional entropy} (later defined in \eqref{eq:conditional_entropy}) is the 
fundamental limit of the maximal  extractable randomness as well as the exponential strong converse rate for privacy amplification on constant-type sources.
Secondly, we show that the established exponents expressed in terms of the Augustin-type information are larger than that of the previous works \cite{Dup21, SGC22b} (see Proposition~\ref{prop:bound}). 
This then demonstrates sharper concentration of $d_{\text{PA}}$ (resp.~$1-d_{\text{PA}}$) to $0$ (resp.~$1$) for constant-type sources.

Moreover, using our achievability bound for constant-type sources together with the type decomposition of independent and identically distributed (i.i.d.)~c-q state, we obtain an achievability bound of privacy amplification on i.i.d~sources (Proposition~\ref{PAiid}):
\begin{align}
& \min_{h^n: \mathcal{X}^n \to \mathcal{Z}^n }\frac{1}{2}\left\|\mathcal{R}^{h^n}(\rho^{\otimes n}_{XB})-\frac{1}{|\mathcal{Z}^n|}\sum_{z^n\in \mathcal{Z}^n}\ket{z^n}\bra{z^n}\otimes \rho_B^{\otimes n}\right\|_1\\
&{\dot{\leq}} \e^{-n \min\limits_{q}\sup\limits_{\alpha\in(1,2)}D(q\Vert p)+\frac{\alpha-1}{\alpha}\big( H(q) - \Breve{I}^*_{\alpha}(X:B)_{\rho^{q}}- R \big)},
\end{align}
where the minimization is over all types, $\rho^q_{XB} = \sum_{x\in \mathcal{X}}q(x)\ket{x}\bra{x}\otimes \rho^x_B$, and $D(\cdot \Vert \cdot)$ is the Kullback--Leibler divergence.
Via an entropic relation proved in \cite[Theorem~10]{CHDH2-2018}, the above error exponent matches the one obtained by Dupuis \cite[Theorem~8]{Dup21} for i.i.d.~sources. Hence, we provide an alternative approach to privacy amplification on i.i.d.~sources
using regular random binning and constant-type sources as a bridge.

Lastly, our achievability bound for constant-type sources applies to private classical communication over classical-quantum wiretap channels $x\mapsto \sigma_{BE}^x$.
We exploit constant composition codes with regular random binning as the coding strategy.
While reliably sending classical information of rate $R$ to Bob (with asymptotically vanishing errors),
we show that the secrecy exponent of information leakage to the quantum eavesdropper (holding system $E$) is at least (Theorem~\ref{theo:wireach}):
\begin{align}
    \sup_{\alpha\in(1,2)}\frac{\alpha-1}{\alpha}\left(I(X:B)_{\sigma} - \Breve{I}^*_{\alpha}(X:E)_{\sigma}- R \right).
\end{align}
where $\sigma_{XBE} = \sum_{x\in \mathcal{X}}p_X(x)\ket{x}\bra{x}\otimes \sigma^x_{BE}$, and 
$I(X:B)_{\sigma}$ is the quantum mutual information,
Further, the secrecy exponent is positive if and only if $R < {I(X\!:\!B)_{\sigma} \!-\! {I}(X\!:\!E)_{\sigma}}$. 
Our result also answers the open problem addressed in \cite[Theorem~28]{Bei18}; namely, random binning is effective for secrecy against quantum side information.

The paper is structured as follows. Section~\ref{sec:notation} reviews the necessary background on information-theoretic quantities. In section \ref{SC}, we prove the achievability and strong converse bound of quantum soft covering without repetition. In section \ref{sec:SCtoPA}, we show the operational equivalence between privacy amplification via regular random binning and quantum soft covering without repetition. In section \ref{PAtype}, we establish achievability and strong converse bounds for privacy amplification. Section \ref{wiretap} shows the application on bounding information leakage in c-q wiretap channel coding. 
We conclude the paper in section \ref{conclude}.

\section{Notations and information quantities}\label{sec:notation}
We denote by $\mathcal{S}(\mathcal{H})$ the set of density operators on Hilbert space $\mathcal{H}$.
Let $\mathscr{P}(\mathcal{X})$ be the set of all probability distributions on a finite alphabet $\mathcal{X}$.
For $\mathsf{p}\geq 1$, the Schatten $\mathsf{p}$-norm is $
\left\|M\right\|_{\mathsf{p}} := (\Tr[|M|^\mathsf{p}])^{\sfrac{1}{\mathsf{p}}}$.
For any $M\in\mathds{N}$, we let $[M]:=\{1,\ldots, M\}$.

For $\alpha\in(0,\infty)\backslash 1$, the order-$\alpha$ Petz--R\'enyi divergence $D_\alpha$ \cite{Pet86} and the sandwiched R\'enyi divergence $D^*_{\alpha}$ \cite{MDS+13, WWY14} are defined as
\begin{align}
D_\alpha(\rho\|\sigma)&:=\frac{1}{\al-1}\log\tr\left[\rho^{\alpha}\sigma^{1-\alpha}\right],\\
D^*_\alpha(\rho\|\sigma)&:= \frac{1}{\alpha-1}\log\left\|\sigma^{\frac{1-\alpha}{2\alpha}}\rho\sigma^{\frac{1-\alpha}{2\alpha}}\right\|_{\alpha}^{\alpha}
\end{align}
Note both divergences converge to the Umegaki relative entropy \cite{Ume62} $D(\rho\|\sigma) := \Tr\left[ \rho (\log \rho - \log \sigma )\right]$ when $\alpha\to 1$ (see e.g.~\cite[Lemma 3.5]{MO14}).

For a classical-quantum (c-q) state $\rho_{XB} = \sum_{x\in\mathcal{X}}p_X(x)\ket{x}\bra{x}\otimes \rho^x_B$, we define the following  \emph{order-$\alpha$ sandwiched Augustin information}:
\begin{align} \label{eq:Sand_Augustin_info}
\begin{split} 
&\Breve{I}^*_{\alpha}(X:B)_{\rho}  \\
&:= \inf_{\sigma_B\in \mathcal{S}(\mathcal{H}_B)}\frac{\alpha}{\alpha-1}\sum_{x\in \mathcal{X}}p_X(x)\log\left\|\sigma_B^{\frac{1-\alpha}{2\alpha}}\rho^x_B\sigma_B^{\frac{1-\alpha}{2\alpha}}\right\|_{\alpha},
\end{split}
\end{align}
as well as the (Augustin-like) Petz-type information quantity:
\begin{align} \label{eq:Petz_Augustin_info}
\Breve{I}^{\uparrow}_{\alpha}(X:B)_{\rho} &:= \sum_{x\in \mathcal{X}}p_X(x)D_{\alpha}(\rho^x_{B}\|\rho_B).
\end{align}
The above two information quantities converge to the \emph{quantum mutual information} $I(X:B)_{\rho}:= D(\rho_{XB}\|\rho_X\otimes\rho_B)$ as $\alpha \to 1$ (see e.g.~\cite{MO17}).

For any $n\in\mathds{N}$, we call a probability distribution $p$ on $\mathcal{X}$ an \emph{$n$-type} if $n p(x) \in \{0\}\cup \mathds{N}$ for all $x\in \mathcal{X}$.
For an $n$-type $p$, we define the \emph{type class} of $p$ as the set of sequences in $\mathcal{X}^n$ with its empirical distributions being $p$, i.e.
\begin{align}\label{eq:type}
T^n_p := \left\{x^n\in \mathcal{X}^n: \frac{1}{n}\sum\nolimits_{ i\in[n] }\mathbf{1}_{\{x=x_i\}}= p(x)\right\}.
\end{align}
The size of $T^n_p$ is bounded as follows \cite{CK11}:
\begin{align}
(n+1)^{-|\mathcal{X}|}\e^{nH(p)} \leq |T^n_p| \leq \e^{nH(p)} \label{typebd}.
\end{align}
The set of all $n$-types on $\mathcal{X}$, denoted as $\mathscr{P}_n(\mathcal{X})$, is bounded by \cite{CK11}
\begin{align}
|\mathscr{P}_n(\mathcal{X})| \leq (n+1)^{|\mathcal{X}|}. \label{typenum}
\end{align}

\section{Quantum Soft Covering without Repetition}\label{SC}
In \cite{CG22}, part of our authors proved the direct bound and strong converse bound for quantum soft covering using a pairwise independent constant composition random  codebook.
Following similar reasoning, we prove an achievable error exponent and a strong converse bound for quantum soft covering using the new random codebook without repetition in Definition~\ref{defn:sc}.

\subsection{Achievability bound}
\begin{theo}\label{scdirect}
For any $n \in \mathds{N}$ and $\rho_{XB} = \sum_{x\in \mathcal{X}}p_X(x)\ket{x}\bra{x}\otimes \rho^x_B$, where $p_X$ is an $n$-type. 
Using the random codebook in Definition~\ref{defn:sc}, the expected trace distance between the output state $\rho^{\Breve{\mathcal{C}}^n}_{B^n}$ and the true marginal state $\Breve{\rho}_{B^n}$ is upper bounded by:
\begin{align}
d_{\textnormal{SC}}(\Breve{\rho}_{X^nB^n}, R) \leq \e^{-n \sup_{\alpha\in(1,2)}\frac{1-\alpha}{\alpha}\left(\Breve{I}^*_{\alpha}(X:B)_{\rho}-R\right)},
\end{align}
where $R:= \frac{1}{n}\log|\Breve{\mathcal{C}}^n|$ is the rate of the codebook size.
\end{theo}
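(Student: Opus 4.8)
The plan is to bound the expected trace distance by a smoothed Rényi-type quantity and then optimize the choice of the auxiliary state $\sigma_B$. First I would fix an arbitrary $\sigma_B \in \mathcal{S}(\mathcal{H}_B)$ and $\alpha \in (1,2)$, and use the operator inequality $\frac12\|\tau - \omega\|_1 \le \frac12\Tr[\tau^{2-\alpha}\sigma_B^{\alpha-1}]^{1/?}\cdots$ — more precisely, I would employ the standard bound $\frac12\|\tau-\omega\|_1 \le \frac12 \big(\text{H\"older-type estimate in terms of } D^*_\alpha\big)$, combined with Jensen/concavity to pass the expectation over the codebook inside. Concretely, writing $\rho^{\Breve{\mathcal{C}}^n}_{B^n} = \frac1M\sum_{i=1}^M \rho_{B^n}^{X^n(i)}$ with the $X^n(i)$ drawn without repetition from $T^n_p$, I would split $\rho^{\Breve{\mathcal{C}}^n}_{B^n} - \Breve{\rho}_{B^n}$ and apply a sandwiched-Rényi version of the triangle/monotonicity bound so that $\frac12\mathds{E}\|\rho^{\Breve{\mathcal{C}}^n}_{B^n} - \Breve{\rho}_{B^n}\|_1$ is controlled by a variance-like term $\big(\mathds{E}\,\Tr[\sigma_{B^n}^{\frac{1-\alpha}{\alpha}}(\rho^{\Breve{\mathcal{C}}^n}_{B^n}-\Breve{\rho}_{B^n})\sigma_{B^n}^{\frac{1-\alpha}{\alpha}}\cdots]\big)$ raised to a power $1-1/\alpha$, with $\sigma_{B^n} = \sigma_B^{\otimes n}$.

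Next I would compute this second-moment term. The key point of the "without repetition" codebook is that sampling without replacement has variance no larger than sampling with replacement (a classical negative-association fact), so the cross terms $\mathds{E}[\rho_{B^n}^{X^n(i)} \otimes \rho_{B^n}^{X^n(j)}]$ for $i \ne j$ are bounded by the i.i.d.\ product $\Breve{\rho}_{B^n}\otimes\Breve{\rho}_{B^n}$. This reduces the computation to the diagonal term $\frac1M \mathds{E}_{X^n}\Tr\big[\big(\sigma_{B^n}^{\frac{1-\alpha}{2\alpha}}\rho_{B^n}^{X^n}\sigma_{B^n}^{\frac{1-\alpha}{2\alpha}}\big)^{\alpha}\big]$ up to normalization, and here I would use that $X^n$ is uniform on the type class $T^n_p$, that $\rho_{B^n}^{x^n} = \bigotimes_i \rho_B^{x_i}$ and $\sigma_{B^n} = \sigma_B^{\otimes n}$ tensorize, and that every $x^n \in T^n_p$ has the same empirical distribution $p$. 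This collapses the trace into $\prod_{x}\big(\Tr[(\sigma_B^{\frac{1-\alpha}{2\alpha}}\rho_B^x\sigma_B^{\frac{1-\alpha}{2\alpha}})^\alpha]\big)^{np(x)} = \e^{\,n\sum_x p(x)\log\|\sigma_B^{\frac{1-\alpha}{2\alpha}}\rho_B^x\sigma_B^{\frac{1-\alpha}{2\alpha}}\|_\alpha^{\alpha}}$, i.e.\ exactly $\e^{\,n(\alpha-1)\cdot(\text{the }\sigma_B\text{-dependent quantity in }\eqref{eq:Sand_Augustin_info})}$. Combining with the $\frac1M = \e^{-nR}$ factor and raising to the power $1-1/\alpha = \frac{\alpha-1}{\alpha}$ yields a bound of the form $\e^{-n\frac{\alpha-1}{\alpha}\big(R - (\text{that quantity})\big)} = \e^{n\frac{1-\alpha}{\alpha}\big((\text{that quantity}) - R\big)}$.

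Finally, since $\sigma_B$ and $\alpha \in (1,2)$ were arbitrary, I would take the infimum over $\sigma_B$ inside the exponent — which by \eqref{eq:Sand_Augustin_info} turns the $\sigma_B$-dependent quantity into $\Breve{I}^*_\alpha(X:B)_\rho$ (one must check the infimum passes through the monotone exponential correctly, which it does since $\tfrac{1-\alpha}{\alpha}<0$ makes $\e^{n\frac{1-\alpha}{\alpha}(\,\cdot\,-R)}$ decreasing in the quantity) — and then take the supremum over $\alpha$, giving the claimed bound. I expect the main obstacle to be the first step: setting up the correct sandwiched-Rényi smoothing inequality so that (i) the expectation over the random codebook can be pulled inside via concavity with the right exponent $1-1/\alpha$, and (ii) the "without repetition" structure can be invoked to discard the off-diagonal terms cleanly. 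Once that inequality is in place, the rest is the tensorization/type computation, which is routine given the earlier development and the analogous argument in \cite{CG22}. I would also need the size bound \eqref{typebd} only for cosmetic polynomial prefactors, which here are absorbed since the stated bound has no prefactor — so care is needed to ensure no $(n+1)^{|\mathcal{X}|}$ factors actually appear, which should follow because the type class is a single type and the uniform average over it is exact.
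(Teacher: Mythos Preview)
The overall shape of your argument is right---H\"older to pass from trace norm to a weighted Schatten-$\alpha$ norm, negative correlation for sampling without replacement to control cross terms, and tensorization over the type class to produce the Augustin information---but the step you yourself flag as ``the main obstacle'' is a genuine gap, and the second-moment computation you sketch does not close it for $\alpha\neq 2$.

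The difficulty is that an expansion into ``diagonal and off-diagonal terms'' only makes sense at $\alpha=2$: there $\|X\|_2^2=\Tr[X^2]$ is bilinear in $X$, so $\mathds{E}\big\|\tfrac1M\sum_i \hat f_i\big\|_2^2=\tfrac1{M^2}\sum_{i,j}\mathds{E}\,\Tr[\hat f_i\hat f_j]$, and your negative-association observation (which is precisely Lemma~\ref{orthogonal}) discards the off-diagonal part. For $\alpha\in(1,2)$, however, $\Tr[|X|^\alpha]$ is not polynomial in $X$ and no such expansion exists; there is no direct route from $\mathds{E}\|\Delta\|_1$ to the single ``diagonal'' quantity $\tfrac1M\,\mathds{E}_{X^n}\Tr\big[(\sigma^{\frac{1-\alpha}{2\alpha}}\rho^{X^n}\sigma^{\frac{1-\alpha}{2\alpha}})^\alpha\big]$ with the exponent you claim. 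Indeed, tracking your powers gives a final bound $\e^{\,n\frac{\alpha-1}{\alpha}[(\alpha-1)\Breve I^*_\alpha - R]}$ rather than $\e^{\,n\frac{\alpha-1}{\alpha}[\Breve I^*_\alpha - R]}$; the two agree only at $\alpha=2$.

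The paper supplies the missing idea: it views the centered average as a linear map $\Theta: L_p(\Omega,S_p(\cH^{\otimes n}))\to L_p(\Omega^M,S_p(\cH^{\otimes n}))$ and bounds its operator norm by \emph{complex interpolation} between the endpoints $p=1$ (triangle inequality gives $\|\Theta\|\le 2$) and $p=2$ (your variance computation gives $\|\Theta\|\le M^{-1/2}$), yielding $\|\Theta\|_{L_\alpha\to L_\alpha}\le 2^{2/\alpha-1}M^{(1-\alpha)/\alpha}$ for all $\alpha\in[1,2]$ (Lemma~\ref{interpolation}). After that one applies H\"older once, $\|\Delta\|_1\le\big\|\sigma_{B^n}^{\frac{1-\alpha}{2\alpha}}\Delta\,\sigma_{B^n}^{\frac{1-\alpha}{2\alpha}}\big\|_\alpha$, pulls the $\sigma$-weights through $\Theta$ by linearity, and invokes the norm bound; your tensorization over $T^n_p$ is then exactly right and produces the Augustin exponent, but with the correct power $1/\alpha$ (not $(\alpha-1)/\alpha$) on the trace term.
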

\noindent The proof is deferred to Appendix \ref{sec:proof_scdirect}.

The main idea is adapted from \cite{CG22} except that we show the cross-moment between random codewords using a codebook given in Definition~\ref{defn:sc} is negative instead of zero compared with the pairwise independent codebook used in \cite{CG22}.

\subsection{Strong converse bound}
\begin{theo}\label{scconverse}
For any $n \in \mathds{N}$ and $\rho_{XB} = \sum_{x\in \mathcal{X}}p_X(x)\ket{x}\bra{x}\otimes \rho^x_B$, where $p_X$ is an $n$-type. 
Using the random codebook in Definition~\ref{defn:sc}, the expected trace distance between the output state $\rho^{\Breve{\mathcal{C}}^n}_{B^n}$ and the true marginal state $\Breve{\rho}_{B^n}$ is lower bounded by:
\begin{align}
&d_{\textnormal{SC}}(\Breve{\rho}_{X^nB^n}, R) \\
&\geq 1- 4(n+1)^{|\mathcal{X}|}\e^{-n\sup\limits_{\alpha\in (\sfrac12, 1)}\frac{1-\alpha}{\alpha}\left(\Breve{I}^{\uparrow}_{2-\sfrac{1}{\alpha}}(X:B)_\rho -R\right)},
\end{align}
where $R= \frac{1}{n}\log|\Breve{\mathcal{C}}^n|$ is the rate of the codebook size.
\end{theo}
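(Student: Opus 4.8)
The plan is to follow the strong-converse strategy for soft covering developed in \cite{CG22}, adapting it to the without-repetition codebook of Definition~\ref{defn:sc}. The natural figure of merit to control is not the trace distance directly but a R\'enyi-type ``information spread'' between the codebook-induced state $\rho^{\Breve{\mathcal{C}}^n}_{B^n}$ and the target $\Breve{\rho}_{B^n}$. Concretely, I would fix a test operator (a projector onto the subspace where $\rho^{\Breve{\mathcal{C}}^n}_{B^n}$ is ``too large'' relative to $\Breve{\rho}_{B^n}$), use the operational characterization $\tfrac12\|\omega-\tau\|_1 \ge \Tr[\Pi(\tau-\omega)]$ for any $0\le\Pi\le\mathds 1$, and then lower bound $1 - d_{\textnormal{SC}}$ by the probability (and expected trace overlap) that the random codebook fails to concentrate. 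The key point distinguishing this from an i.i.d.\ codebook is that the $M$ codewords are drawn \emph{without replacement} from $T^n_p$; as noted after Theorem~\ref{scdirect}, the pairwise cross-moments are then \emph{negative} rather than zero, which only helps in the achievability direction but must be handled carefully (and harmlessly) here.

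The main steps I would carry out, in order: (1) Reduce the trace distance to a hypothesis-testing-type quantity: for a parameter $\gamma$ to be optimized, let $\Pi$ be the projector onto $\{\,\rho^{\Breve{\mathcal C}^n}_{B^n} - \e^{n\gamma}\Breve\rho_{B^n} > 0\,\}$, so that $d_{\textnormal{SC}} \ge \Tr[\Pi\,\rho^{\Breve{\mathcal C}^n}_{B^n}] - \e^{n\gamma}\Tr[\Pi\,\Breve\rho_{B^n}] \ge \Tr[\Pi\,\rho^{\Breve{\mathcal C}^n}_{B^n}] - \e^{n\gamma}$. (2) Bound the ``good event'' $\Tr[\Pi\,\rho^{\Breve{\mathcal C}^n}_{B^n}]$ from below by $1$ minus a Markov-type tail term, i.e.\ control $\E_{\Breve{\mathcal C}^n}\Tr[(\mathds 1-\Pi)\,\rho^{\Breve{\mathcal C}^n}_{B^n}]$, which by definition of $\Pi$ involves a moment of the form $\E\,\Tr\!\big[(\rho^{\Breve{\mathcal C}^n}_{B^n})^{\beta}\,\Breve\rho_{B^n}^{1-\beta}\big]$ for a suitable exponent $\beta\in(0,1)$ tied to $\alpha = 1/(2-\beta)$ — this is where the Petz-type divergence $D_{2-1/\alpha}$ and hence $\Breve I^{\uparrow}_{2-1/\alpha}(X\!:\!B)_\rho$ enters. (3) Expand $(\rho^{\Breve{\mathcal C}^n}_{B^n})^{\beta} = M^{-\beta}\big(\sum_{x^n\in\Breve{\mathcal C}^n}\rho^{x^n}_{B^n}\big)^{\beta}$ and use operator concavity of $t\mapsto t^\beta$ for $\beta\in(0,1)$ to pull the power inside (or equivalently use the ``pinching''/linearization bound $\big(\sum_i A_i\big)^\beta \le \sum_i A_i^\beta$), reducing everything to single-codeword quantities $\Tr[(\rho^{x^n}_{B^n})^\beta\,\Breve\rho_{B^n}^{1-\beta}]$; averaging over the uniform-without-replacement codeword distribution, each marginal is uniform on $T^n_p$, giving a clean product-type bound $\e^{-n(1-\beta)(\Breve I^{\uparrow}_{2-1/\alpha}(X:B)_\rho - \text{something})}$ after additivity of the Petz divergence over the tensor-product channel and the type structure. (4) Collect the $(n+1)^{|\mathcal X|}$ polynomial factor from the number of types (inequality \eqref{typenum}) and from the dimension bound \eqref{typebd}, choose $\gamma$ to balance the two error terms against $R$, optimize over $\alpha\in(1/2,1)$, and arrive at the stated bound with the constant $4$.

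The main obstacle I anticipate is step (3): controlling the moment $\E_{\Breve{\mathcal C}^n}\Tr\!\big[(\rho^{\Breve{\mathcal C}^n}_{B^n})^{\beta}\,\Breve\rho_{B^n}^{1-\beta}\big]$ with the \emph{without-replacement} distribution rather than an i.i.d.\ one. For i.i.d.\ codewords one would simply use independence; here the operator power $(\cdot)^\beta$ makes even the linearized bound depend on cross terms, and the without-replacement law introduces negative correlations that one must argue do not hurt (they should only help, exactly as in \cite{CG22} for the achievability side). I would handle this either by a direct comparison of the without-replacement expectation to the with-replacement one via a coupling/negative-association argument, or — more robustly — by using the operator-concavity linearization first (which turns the quantity into a \emph{sum} of single-codeword terms, hence only the one-codeword marginal matters and correlations disappear entirely), at the cost of a possibly looser but still sufficient bound. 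Everything else is bookkeeping: additivity of $D_\alpha$ over tensor products, the type-size estimates \eqref{typebd}–\eqref{typenum}, and the variational form \eqref{eq:Petz_Augustin_info} of $\Breve I^{\uparrow}_{\alpha}$.
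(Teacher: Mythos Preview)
Your linearization in step~(3) is a sound and in fact cleaner alternative to what the paper does: once you bound $(\tfrac1M\sum_{x^n}\rho^{x^n}_{B^n})^\beta \le M^{-\beta}\sum_{x^n}(\rho^{x^n}_{B^n})^\beta$ for $\beta\in(0,1)$, only the one-codeword marginal (uniform on $T^n_p$) survives the expectation and the without-replacement correlations disappear entirely, exactly as you anticipate. The paper instead follows \cite{CG22}: it uses the test operator $(\rho^{\Breve{\mathcal C}^n}_{B^n}+\Breve\rho_{B^n})^{-1/2}\rho^{\Breve{\mathcal C}^n}_{B^n}(\rho^{\Breve{\mathcal C}^n}_{B^n}+\Breve\rho_{B^n})^{-1/2}$ to reach a ``one-versus-rest'' expression $\Tr\big[(\rho^{x^n}_{B^n})^{1-s}(\sum_{\bar x^n\neq x^n}\rho^{\bar x^n}_{B^n}+M\Breve\rho_{B^n})^s\big]$, and then handles the without-replacement law by conditioning on one fixed codeword and applying Jensen for the operator-concave map $(\cdot)^s$. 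Both routes land on the same key quantity $M^s\,|T^n_p|^{-1}\sum_{x^n\in T^n_p}\Tr[(\rho^{x^n}_{B^n})^{1-s}\Breve\rho_{B^n}^{\,s}]$.

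There are, however, two genuine gaps. First, step~(1) as written fails: the chain $d_{\textnormal{SC}}\ge \Tr[\Pi\rho^{\Breve{\mathcal C}^n}_{B^n}]-\e^{n\gamma}\Tr[\Pi\Breve\rho_{B^n}]\ge \Tr[\Pi\rho^{\Breve{\mathcal C}^n}_{B^n}]-\e^{n\gamma}$ requires $\gamma\ge 0$ for the first inequality (the variational bound $\tfrac12\|\rho-\sigma\|_1\ge\Tr[\Pi(\rho-\sigma)]$ gives coefficient $1$, not $\e^{n\gamma}$, on $\Breve\rho_{B^n}$), but then the subtracted $\e^{n\gamma}\ge 1$ makes the bound vacuous. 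Even if you repair this to a legitimate two-term estimate $1-d_{\textnormal{SC}}\le \e^{ns\gamma}\,\mathds E\Tr[(\rho^{\Breve{\mathcal C}^n}_{B^n})^{1-s}\Breve\rho_{B^n}^{\,s}]+\e^{-n\gamma}$ and balance over $\gamma$, you get exponent $\tfrac{s}{1+s}(\Breve I^{\uparrow}_{2-1/\alpha}-R)$, strictly weaker than the claimed $s(\Breve I^{\uparrow}_{2-1/\alpha}-R)$ with $s=(1-\alpha)/\alpha$; so this route does \emph{not} prove the theorem as stated. The fix is to drop the projector and $\gamma$ altogether and use Audenaert's inequality $1-\tfrac12\|\rho-\sigma\|_1\le\Tr[\rho^{1-s}\sigma^s]$ directly, then apply your step~(3). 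Second, your step~(4) misidentifies the source of $(n+1)^{|\mathcal X|}$ and omits a crucial move: $\Breve\rho_{B^n}$ is \emph{not} a product state, so the Petz divergence is not additive on $\Tr[(\rho^{x^n}_{B^n})^{1-s}\Breve\rho_{B^n}^{\,s}]$ and the single-letter $\Breve I^{\uparrow}_{2-1/\alpha}(X\!:\!B)_\rho$ does not yet appear. The paper inserts the operator inequality $\Breve\rho_{B^n}\le \rho_B^{\otimes n}/p_X^{\otimes n}(T^n_p)$ and uses operator monotonicity of $(\cdot)^s$; the polynomial then comes from $p_X^{\otimes n}(T^n_p)^{-s}\le (n+1)^{|\mathcal X|}$, not from the type-counting bound \eqref{typenum}.
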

\noindent The proof is deferred to Appendix \ref{sec:proof_scconverse}. 

\section{Operational Equivalence between Privacy Amplification and Soft Covering} \label{sec:SCtoPA}
\begin{theo}[Equivalence between soft covering and privacy amplification]\label{equiv}
For any $n\in\mathds{N}$, $n$-type $p_X$, and 
$\Breve{\rho}_{X^nB^n} = \sum_{x^n\in T^n_p} \frac{1}{|T^n_p|}\ket{x^n}\bra{x^n}\otimes \rho^{x^n}_{B^n}$, 
the following holds for all $R$:
\begin{align}
d_{\textnormal{PA}}(\Breve{\rho}_{X^nB^n}, R) = d_{\textnormal{SC}}\left(\Breve{\rho}_{X^nB^n}, \sfrac{\log{|T^n_p|}}{n}- R\right).  
\end{align}
\end{theo}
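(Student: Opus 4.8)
The plan is to exhibit an explicit bijective correspondence between the randomness of a regular random binning function $h^n : T^n_p \to \mathcal{Z}^n$ and the randomness of a constant-composition codebook without repetition, in such a way that the two trace-distance expressions are literally equal term by term. First I would record the following combinatorial fact: a uniformly random $k$-to-$1$ function $h^n$ on $T^n_p$ (where $k = |T^n_p| / |\mathcal{Z}^n|$, assuming as one must that $|\mathcal{Z}^n|$ divides $|T^n_p|$) is the same as a uniformly random ordered partition of $T^n_p$ into $|\mathcal{Z}^n|$ labelled blocks each of size $k$. Conditioned on which block a fixed element $z^n$ indexes, the set $h^{-1}(z^n)$ is a uniformly random $k$-subset of $T^n_p$; and the complementary set $T^n_p \setminus h^{-1}(z^n)$ is a uniformly random $(|T^n_p|-k)$-subset. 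The key observation is that a uniformly random $k$-subset has the same distribution as the unordered image of the codebook $\Breve{\mathcal{C}}^n$ of size $M = |T^n_p| - k$ in Definition~\ref{defn:sc} — because choosing $M$ distinct codewords uniformly is the same as choosing a uniformly random $M$-subset and forgetting the order, and the distributions in \eqref{eq:prop_constant_type} make this precise.

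The second step is to rewrite the privacy-amplification trace distance so that the complementary set appears. Since $\mathcal{R}^{h^n}(\Breve{\rho}_{X^nB^n}) = \sum_{z^n} |z^n\rangle\langle z^n| \otimes \sum_{x^n \in h^{-1}(z^n)} \tfrac{1}{|T^n_p|}\rho^{x^n}_{B^n}$ is block-diagonal in the $Z^n$ register, and the ideal state $\tfrac{\mathds{1}_{Z^n}}{|\mathcal{Z}^n|}\otimes \Breve{\rho}_{B^n}$ is likewise block-diagonal, the trace norm splits as a sum over $z^n$:
\begin{align}
d_{\textnormal{PA}} = \frac{1}{2}\mathds{E}_{h^n}\sum_{z^n\in\mathcal{Z}^n}\left\| \sum_{x^n\in h^{-1}(z^n)}\tfrac{1}{|T^n_p|}\rho^{x^n}_{B^n} - \tfrac{1}{|\mathcal{Z}^n|}\Breve{\rho}_{B^n}\right\|_1.
\end{align}
Writing $\Breve{\rho}_{B^n} = \sum_{x^n\in T^n_p}\tfrac{1}{|T^n_p|}\rho^{x^n}_{B^n}$ and pulling $\tfrac{1}{|\mathcal{Z}^n|} = \tfrac{k}{|T^n_p|}$, inside each block the operand becomes $\tfrac{1}{|T^n_p|}\bigl(\sum_{x^n\in h^{-1}(z^n)}\rho^{x^n}_{B^n} - \tfrac{k}{|T^n_p|}\sum_{x^n\in T^n_p}\rho^{x^n}_{B^n}\bigr)$, and using $|T^n_p| = M + k$ one rearranges this into a multiple of $\sum_{x^n\in h^{-1}(z^n)}\rho^{x^n}_{B^n}\cdot\tfrac{M}{|T^n_p|} - \sum_{x^n\notin h^{-1}(z^n)}\rho^{x^n}_{B^n}\cdot\tfrac{k}{|T^n_p|}$, i.e.\ precisely $\tfrac{k}{|T^n_p|}$ times the difference between the (normalized) empirical state on the complement $T^n_p\setminus h^{-1}(z^n)$ — a set of size $M$ — and $\Breve{\rho}_{B^n}$. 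By symmetry the inner expectation over $h^{-1}(z^n)$ is the same for every $z^n$, so the sum over the $|\mathcal{Z}^n|$ values of $z^n$ just cancels the prefactor $\tfrac{k|\mathcal{Z}^n|}{|T^n_p|} = 1$, leaving exactly $\tfrac{1}{2}\mathds{E}_{\Breve{\mathcal{C}}^n}\|\rho^{\Breve{\mathcal{C}}^n}_{B^n} - \Breve{\rho}_{B^n}\|_1$ with codebook size $M$.

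The third step is bookkeeping on the rates: $R_{\textnormal{SC}} = \tfrac{1}{n}\log M = \tfrac{1}{n}\log(|T^n_p| - k)$, while $R = R_{\textnormal{PA}} = \tfrac{1}{n}\log|\mathcal{Z}^n|$, so $k = |T^n_p|/|\mathcal{Z}^n| = |T^n_p|e^{-nR}$ and $M = |T^n_p|(1 - e^{-nR})$. This does not exactly equal $|T^n_p|e^{-nR}$, so strictly one should be slightly careful — but the statement as written uses $\sfrac{\log|T^n_p|}{n} - R$, and the natural reading is that $d_{\textnormal{SC}}$ is defined for the codebook whose size is the relevant integer; I would simply match the cardinalities directly rather than through the logarithms, or note that $R$ ranges over the values for which $|\mathcal{Z}^n|$ divides $|T^n_p|$ so that $\mathcal{R}^{h^n}$ is genuinely regular. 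The main obstacle I anticipate is the bookkeeping in the second step — verifying that the "complement of a uniform $k$-subset" really is distributed exactly as the codebook in Definition~\ref{defn:sc}, including the degenerate normalization when $M=0$ or $M=|T^n_p|$, and getting all the $\tfrac{1}{|T^n_p|}$, $\tfrac{1}{|\mathcal{Z}^n|}$, $\tfrac{1}{M}$ prefactors to cancel cleanly — rather than anything conceptually deep; once the block-diagonal split and the subset/complement duality are in place, the identity is forced.
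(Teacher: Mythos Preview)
Your block-diagonal split and the symmetry reduction over $z^n$ are exactly right, and they match the paper's argument. The gap is in your second step: you pass to the \emph{complement} $T^n_p\setminus h^{-1}(z^n)$ as the codebook, and this is both unnecessary and algebraically incorrect as written. After the split, each block is
\[
\frac{1}{|T^n_p|}\sum_{x^n\in h^{-1}(z^n)}\rho^{x^n}_{B^n}-\frac{1}{|\mathcal{Z}^n|}\Breve{\rho}_{B^n}
=\frac{k}{|T^n_p|}\Bigl(\frac{1}{k}\sum_{x^n\in h^{-1}(z^n)}\rho^{x^n}_{B^n}-\Breve{\rho}_{B^n}\Bigr),
\]
which already exhibits the \emph{preimage} (a uniformly random $k$-subset, hence a codebook without repetition of size $k$) as the relevant random code. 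Summing over the $|\mathcal{Z}^n|=|T^n_p|/k$ identical blocks and using symmetry gives $d_{\textnormal{PA}}=\mathds{E}\bigl\|\rho^{\Breve{\mathcal{C}}^n}_{B^n}-\Breve{\rho}_{B^n}\bigr\|_1$ with $|\Breve{\mathcal{C}}^n|=k=|T^n_p|/|\mathcal{Z}^n|$, so $R_{\textnormal{SC}}=\tfrac{1}{n}\log k=\tfrac{\log|T^n_p|}{n}-R$ on the nose. This is precisely what the paper does.

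Your detour through the complement breaks because, writing $A$ and $B$ for the normalized empirical states on the preimage and its complement, one has $A-\Breve{\rho}_{B^n}=\tfrac{M}{|T^n_p|}(A-B)$ while $B-\Breve{\rho}_{B^n}=-\tfrac{k}{|T^n_p|}(A-B)$, so $\|A-\Breve{\rho}_{B^n}\|_1=\tfrac{M}{k}\|B-\Breve{\rho}_{B^n}\|_1$, not equality. Consequently the per-block prefactor tying the original expression to $B-\Breve{\rho}_{B^n}$ is $\tfrac{M}{|T^n_p|}$, not $\tfrac{k}{|T^n_p|}$ as you wrote, and after the sum over $z^n$ you would be left with a spurious factor $M/k$. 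This is also why your rate bookkeeping in the third step fails: the ``correct'' codebook is the preimage of size $k$, not the complement of size $M=|T^n_p|-k$, and with that identification the rates match exactly with no approximation needed.
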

\begin{proof}
Let $\log |\mathcal{Z}^n| = nR$. With the definition of the regular random binning function $\mathcal{R}^{h^n}$,
\begin{align}
&\mathds{E}_{h^n}\left\|\mathcal{R}^{h^n}(\Breve{\rho}_{X^nB^n})- \frac{\mathds{1}_{Z^n}}{|\mathcal{Z}^n|}\otimes \Breve{\rho}_{B^n}\right\|_1\\
& = \mathds{E}_{h^n} \bigg\|\frac{1}{|T^n_p|}\sum_{z^n\in \mathcal{Z}^n}\ket{z^n}\bra{z^n}\otimes \left(\sum_{x^n: h^n(x^n) = z^n}\rho^{x^n}_{B^n}\right)\\
& \quad- \frac{1}{|\mathcal{Z}^n|}\sum_{z^n\in \mathcal{Z}^n}\ket{z^n}\bra{z^n}\otimes \Breve{\rho}_{B^n}\bigg\|_1\\
& = \mathds{E}_{h^n} \frac{1}{|\mathcal{Z}^n|}\sum_{z^n\in \mathcal{Z}^n}\bigg\|\frac{1}{\sfrac{|T^n_p|}{|\mathcal{Z}^n|}}\sum_{x^n: h^n(x^n) = z^n}\rho^{x^n}_{B^n}- \Breve{\rho}_{B^n}\bigg\|_1\\
& =  \frac{1}{|\mathcal{Z}^n|}\sum_{z^n\in \mathcal{Z}^n}\mathds{E}_{h^n}\bigg\|\frac{1}{\sfrac{|T^n_p|}{|\mathcal{Z}^n|}}\sum_{x^n: h^n(x^n) = z^n}\rho^{x^n}_{B^n}- \Breve{\rho}_{B^n}\bigg\|_1\\
& = \mathds{E}_{h^n}\left\|\frac{1}{\sfrac{|T^n_p|}{|\mathcal{Z}^n|}}\sum_{x^n: h^n(x^n) = z_1^n}\rho^{x^n}_{B^n}- \Breve{\rho}_{B^n}\right\|_1\\
& = \mathds{E}_{\Breve{\mathcal{C}}^n}\left\|\rho^{\Breve{\mathcal{C}}^n}_{B^n}- \Breve{\rho}_{B^n}\right\|_1,
\end{align}
where $z^n_1$ is an arbitrary element in $\mathcal{Z}^n$ and $|\Breve{\mathcal{C}}^n| = \frac{|T^n_p|}{|\mathcal{Z}^n|}$ is the size of the  constant composition codebook without repetition (see Definition~\ref{defn:sc}). With the above equality, the secrecy of privacy amplification with rate $R$ is the same as the approximation error of quantum soft covering without repetition with the rate being $\frac1n\log{\frac{|T^n_p|}{|\mathcal{Z}^n|}} = \frac1n \log{|T^n_p|}-R$, which matches our statement in the theorem.
\end{proof}

\section{bounding privacy amplification \\ from soft covering}\label{PAtype}

Throughout this section, we fix a classical-quantum state $\rho_{XB} = \sum_{x\in\mathcal{X}} p(x)|x\rangle \langle x| \otimes \rho_B^x$.
In Section~\ref{sec:PA_constant-type}, we establish both achievability and strong converse of privacy amplification on constant-type sources.
In Section~\ref{sec:i.i.d.}, we recover the achievability of privacy amplification on i.i.d.~sources $\rho_{XB}^{\otimes n}$.

\subsection{Privacy amplification on constant-type sources} \label{sec:PA_constant-type}

By choosing $|\Breve{\mathcal{C}}^n| = \sfrac{|T^n_p|}{|\mathcal{Z}^n|}$,
Theorem~\ref{equiv} shows that the following relation holds between privacy amplification for constant-type sources $\Breve{\rho}_{X^n B^n}$ introduced in \eqref{eq:const-type} and soft covering without repetition:
\begin{align}
\mathds{E}_{h^n}\bigg\|\mathcal{R}^{h^n}(\Breve{\rho}_{X^nB^n}) \!-\! \frac{\mathds{1}_{Z^n}}{|\mathcal{Z}^n|}\otimes \Breve{\rho}_{B^n}\bigg\|_1 
& =  \mathds{E}_{\Breve{\mathcal{C}}^n}\left\|\rho^{\Breve{\mathcal{C}}^n}_{B^n} \!-\! \Breve{\rho}_{B^n}\right\|_1.
\end{align}
With this relation, we establish an achievability bound for privacy amplification on constant-type sources $\Breve{\rho}_{X^nB^n} $
using Theorem~\ref{scdirect} and the bounds of $|T^n_p|$ shown in \eqref{typebd}.

\begin{theo}[Achievability for privacy amplification on constant-type sources]\label{PAdirect}
Consider any $n\in\mathds{N}$, any $n$-type $p$, and a constant-type source $
\Breve{\rho}_{X^nB^n} = \sum_{x^n\in T^n_p} \frac{1}{|T^n_p|}\ket{x^n}\bra{x^n}\otimes \rho^{x^n}_{B^n}$ defined in \eqref{eq:const-type}.
Then,
\begin{align}
&d_{\textnormal{PA}}(\Breve{\rho}_{X^nB^n}, R) \\
&\leq \e^{-n \sup_{\alpha\in(1,2)}\frac{\alpha-1}{\alpha}\left(\frac1n\log{|T^n_p|}-\Breve{I}^*_{\alpha}(X:B)_{\rho}- R\right)}\\ 
& \leq (n+1)^{\frac{1}{2}|\mathcal{X}|}\e^{-n \sup_{\alpha\in(1,2)}\frac{\alpha-1}{\alpha}\left(H(p) - \Breve{I}^*_{\alpha}(X:B)_{\rho}- R\right)}.
\end{align}
\end{theo}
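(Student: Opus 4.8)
The plan is to combine the operational equivalence of Theorem~\ref{equiv} with the soft-covering achievability bound of Theorem~\ref{scdirect} and then control the type-class size via \eqref{typebd}. Concretely, set $\log|\mathcal{Z}^n| = nR$ and apply Theorem~\ref{equiv} to write
\begin{align}
d_{\textnormal{PA}}(\Breve{\rho}_{X^nB^n}, R) = d_{\textnormal{SC}}\!\left(\Breve{\rho}_{X^nB^n},\, \tfrac1n\log|T^n_p| - R\right).
\end{align}
Now invoke Theorem~\ref{scdirect} with the codebook rate $R_{\textnormal{SC}} := \tfrac1n\log|T^n_p| - R$; since the statement of Theorem~\ref{scdirect} holds for any codebook size, this is a legitimate substitution. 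This yields
\begin{align}
d_{\textnormal{PA}}(\Breve{\rho}_{X^nB^n}, R) \leq \e^{-n \sup_{\alpha\in(1,2)}\frac{1-\alpha}{\alpha}\left(\Breve{I}^*_{\alpha}(X:B)_{\rho} - R_{\textnormal{SC}}\right)},
\end{align}
and substituting $R_{\textnormal{SC}} = \tfrac1n\log|T^n_p| - R$ and noting $\tfrac{1-\alpha}{\alpha} = -\tfrac{\alpha-1}{\alpha}$ gives exactly the first displayed inequality of the theorem, namely the exponent $\tfrac{\alpha-1}{\alpha}\big(\tfrac1n\log|T^n_p| - \Breve{I}^*_{\alpha}(X:B)_{\rho} - R\big)$.

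For the second inequality I would pass from $\tfrac1n\log|T^n_p|$ to the Shannon entropy $H(p)$ using the lower bound $|T^n_p| \geq (n+1)^{-|\mathcal{X}|}\e^{nH(p)}$ from \eqref{typebd}, i.e.\ $\tfrac1n\log|T^n_p| \geq H(p) - \tfrac{|\mathcal{X}|}{n}\log(n+1)$. Since $\alpha\in(1,2)$ makes $\tfrac{\alpha-1}{\alpha} \in (0,\tfrac12)$ positive, replacing $\tfrac1n\log|T^n_p|$ by the smaller quantity $H(p) - \tfrac{|\mathcal{X}|}{n}\log(n+1)$ in the exponent only increases the bound; pulling the constant term out of the exponent contributes a factor
\begin{align}
\e^{\,n\cdot\frac{\alpha-1}{\alpha}\cdot\frac{|\mathcal{X}|}{n}\log(n+1)} = (n+1)^{\frac{\alpha-1}{\alpha}|\mathcal{X}|} \leq (n+1)^{\frac12|\mathcal{X}|},
\end{align}
where the last step again uses $\tfrac{\alpha-1}{\alpha} < \tfrac12$ for $\alpha < 2$. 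Since this factor is uniform in $\alpha$, it can be pulled outside the supremum, producing the claimed prefactor $(n+1)^{\frac12|\mathcal{X}|}$ in front of $\e^{-n\sup_{\alpha}\frac{\alpha-1}{\alpha}(H(p) - \Breve{I}^*_{\alpha}(X:B)_{\rho} - R)}$.

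There is no real obstacle here — the argument is a short chain of substitutions once Theorems~\ref{equiv} and \ref{scdirect} are in hand. The only point requiring a modicum of care is the direction of the inequalities when swapping $\tfrac1n\log|T^n_p|$ for $H(p)$: one must check that the sign of $\tfrac{\alpha-1}{\alpha}$ is positive on the whole range $\alpha\in(1,2)$ so that the bound weakens (rather than becomes false) under the substitution, and that the resulting polynomial factor $(n+1)^{\frac{\alpha-1}{\alpha}|\mathcal{X}|}$ admits the uniform bound $(n+1)^{\frac12|\mathcal{X}|}$ that lets it escape the supremum over $\alpha$. All remaining manipulations are routine.
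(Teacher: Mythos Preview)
Your proposal is correct and follows essentially the same approach as the paper: invoke the operational equivalence of Theorem~\ref{equiv}, substitute the resulting soft-covering rate into Theorem~\ref{scdirect}, and then apply the type-class bound \eqref{typebd} together with $\tfrac{\alpha-1}{\alpha}<\tfrac12$ to obtain the polynomial prefactor. The paper compresses this into a single sentence before the theorem statement, but the underlying steps are identical to yours.
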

\begin{remark}
Theorem~\ref{PAdirect} generalizes the previous result \cite[Theorem 25]{Bei18} considering classical side information to the scenario of quantum side information.
\end{remark}

We remark that Dupuis established a very nice achievability bound for privacy amplification against quantum side information \cite{Dup21}. In the i.i.d.~setting where the underlying source is an $n$-fold product of $\rho_{XB}$, \cite[Theorem~8]{Dup21} shows that

\begin{align} \label{eq:Dup21}
d_{\textnormal{PA}}({\rho}_{X B}^{\otimes n}, R) 
\leq \e^{ - n \sup_{\alpha\in(1,2)}  \frac{\alpha-1}{\alpha}\left(H^*_{\alpha}(X|B)_{\rho}- R\right) },
\end{align}
where $H^*_{\alpha}(X{\,|\,}B)_{\rho} := - \inf_{\sigma_B\in \mathcal{S}(\mathcal{H}_B)}D^*_{\alpha}(\rho_{XB}\|\mathds{1}_X\otimes \sigma_B)$ is the \emph{sandwiched conditional R\'enyi entropy}.
It is thus interesting to compare our Theorem~\ref{PAdirect} with \eqref{eq:Dup21} (under the same prior $p$).
In Proposition~\ref{prop:bound} below, we show that, indeed, the (achievable) error exponent for constant-type sources $\Breve{\rho}_{X^nB^n}$ is larger than that of the i.i.d.~source ${\rho}_{X B}^{\otimes n}$.
This means that under the same rate $R$ and prior $p$, one achieves less information leakage for the constant-type sources compared to the i.i.d.~sources (albeit with an additional polynomial prefactor).
Later in Section~\ref{sec:i.i.d.}, we will show how to recover the scenario of i.i.d.~sources, i.e.~the error exponent in \eqref{eq:Dup21}, via Theorem~\ref{PAdirect}.

\begin{prop} \label{prop:bound}
    For any classical-quantum state $\rho_{XB} = \sum_{x\in\mathcal{X}} p(x)|x\rangle \langle x| \otimes \rho_B^x$ and $\alpha>1$,
    \begin{align}
    H(p) - \Breve{I}^*_{\alpha}(X:B)_{\rho} \geq H^*_{\alpha}(X{\,|\,}B)_{\rho}.
    \end{align}
\end{prop}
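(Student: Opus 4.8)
The plan is to unfold both sides in terms of the underlying c-q state $\rho_{XB}=\sum_x p(x)\ket{x}\bra{x}\otimes\rho_B^x$ and compare the two variational problems. Writing out the right-hand side, we have
\begin{align}
H^*_{\alpha}(X\,|\,B)_{\rho}
= -\inf_{\sigma_B}D^*_{\alpha}\big(\rho_{XB}\,\big\|\,\mathds{1}_X\otimes\sigma_B\big)
= \frac{\alpha}{1-\alpha}\,\inf_{\sigma_B}\log\sum_{x}p(x)^{\alpha}\big\|\sigma_B^{\frac{1-\alpha}{2\alpha}}\rho_B^x\sigma_B^{\frac{1-\alpha}{2\alpha}}\big\|_{\alpha},
\end{align}
where I have used that $\rho_{XB}$ is block-diagonal in $X$, so the sandwiched divergence against $\mathds{1}_X\otimes\sigma_B$ decomposes blockwise. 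On the left-hand side, by definition \eqref{eq:Sand_Augustin_info},
\begin{align}
H(p)-\Breve{I}^*_{\alpha}(X:B)_{\rho}
= H(p) + \frac{\alpha}{1-\alpha}\,\inf_{\sigma_B}\sum_{x}p(x)\log\big\|\sigma_B^{\frac{1-\alpha}{2\alpha}}\rho_B^x\sigma_B^{\frac{1-\alpha}{2\alpha}}\big\|_{\alpha}.
\end{align}
Since $\alpha>1$ gives $\frac{\alpha}{1-\alpha}<0$, the claimed inequality $H(p)-\Breve{I}^*_{\alpha}\geq H^*_{\alpha}$ is equivalent, after moving the common $\inf_{\sigma_B}$ considerations aside, to showing for each fixed $\sigma_B$ that
\begin{align}
\log\sum_x p(x)^{\alpha}\,t_x \;\geq\; \sum_x p(x)\log t_x \;-\; H(p),
\qquad t_x:=\big\|\sigma_B^{\frac{1-\alpha}{2\alpha}}\rho_B^x\sigma_B^{\frac{1-\alpha}{2\alpha}}\big\|_{\alpha},
\end{align}
and then take infima appropriately (one must be a little careful about the direction: since both sides involve $\inf_{\sigma_B}$ with a negative multiplier, I would argue that it suffices to prove the pointwise-in-$\sigma_B$ inequality in the form above and then pass to the infimum on the left, bounding below by the infimum over the same $\sigma_B$ used on the right).

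The key step is therefore the elementary inequality
\begin{align}
\log\sum_x p(x)^{\alpha} t_x \;\geq\; -H(p) + \sum_x p(x)\log t_x,
\end{align}
valid for any probability distribution $p$ and positive reals $t_x$. This is just the log-sum / Jensen inequality: write $\sum_x p(x)^{\alpha}t_x = \sum_x p(x)\cdot\big(p(x)^{\alpha-1}t_x\big)$, apply concavity of $\log$ to get $\log\sum_x p(x)\big(p(x)^{\alpha-1}t_x\big)\geq \sum_x p(x)\log\big(p(x)^{\alpha-1}t_x\big) = (\alpha-1)\sum_x p(x)\log p(x) + \sum_x p(x)\log t_x = -(\alpha-1)H(p)+\sum_x p(x)\log t_x$. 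Since $\alpha>1$ and $H(p)\geq 0$, we have $-(\alpha-1)H(p)\geq -H(p)$... actually here one should be slightly careful: we want $\geq -H(p)+\sum p(x)\log t_x$, and $-(\alpha-1)H(p)\geq -H(p)$ holds precisely when $\alpha\le 2$; for general $\alpha>1$ one instead keeps the sharper constant $-(\alpha-1)H(p)$. I expect the statement is really being used for $\alpha\in(1,2)$ (as in Theorem~\ref{PAdirect}), or alternatively the intended reading already has the factor built in; in any case the Jensen step is the whole content.

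The main obstacle — really the only subtlety — is handling the two infima over $\sigma_B$ correctly, since the multiplicative factor $\frac{\alpha}{1-\alpha}$ is negative and so $\inf$ on the inside becomes $\sup$ on the outside; one must check that the pointwise inequality, once multiplied through, still yields the claimed ordering of the optimized quantities. Concretely: let $\sigma_B^{\star}$ be a minimizer (or near-minimizer) for the $H^*_\alpha$ problem; plug it into the pointwise inequality; this bounds $H(p)+\frac{\alpha}{1-\alpha}\sum_x p(x)\log t_x(\sigma_B^\star)$ from below by $\frac{\alpha}{1-\alpha}\log\sum_x p(x)^\alpha t_x(\sigma_B^\star) = H^*_\alpha(X|B)_\rho$; and since $H(p)-\Breve I^*_\alpha(X:B)_\rho$ is the infimum over all $\sigma_B$ of the left-hand quantity, it is in particular $\leq$ that value at $\sigma_B^\star$ — wait, that is the wrong direction, so in fact one must run the argument with a minimizer of the \emph{Augustin} problem instead and use that the $H^*_\alpha$ expression is itself an infimum. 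I would therefore set $\sigma_B^{\star}$ to be a minimizer of $\sum_x p(x)\log t_x(\sigma_B)$ (the Augustin problem), apply the pointwise Jensen bound at that $\sigma_B^{\star}$, and then note $H^*_\alpha(X|B)_\rho \leq \frac{\alpha}{1-\alpha}\log\sum_x p(x)^\alpha t_x(\sigma_B^\star) \leq H(p) + \frac{\alpha}{1-\alpha}\sum_x p(x)\log t_x(\sigma_B^\star) = H(p)-\Breve I^*_\alpha(X:B)_\rho$, which is exactly the claim. Existence of the minimizer is standard (continuity of $\sigma_B\mapsto D^*_\alpha(\rho_B^x\|\sigma_B)$ on the compact set of states, with the usual support caveats handled as in \cite{MO17,CGH18}); if one prefers to avoid existence issues entirely, the same chain works with an $\varepsilon$-approximate minimizer and a limit $\varepsilon\to 0$.
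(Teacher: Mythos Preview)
Your strategy --- Jensen's inequality on the concavity of $\log$ --- is exactly the paper's. The trouble is a formula error that propagates through the rest. In the block decomposition of $D^*_\alpha(\rho_{XB}\,\|\,\mathds{1}_X\otimes\sigma_B)$, each block contributes $p(x)^\alpha\big\|\sigma_B^{\frac{1-\alpha}{2\alpha}}\rho_B^x\sigma_B^{\frac{1-\alpha}{2\alpha}}\big\|_\alpha^{\alpha}$, i.e.\ $p(x)^\alpha t_x^{\alpha}$ in your notation, so the correct expression is
\[
H^*_\alpha(X{\,|\,}B)_\rho \;=\; \frac{1}{1-\alpha}\,\inf_{\sigma_B}\log\sum_x p(x)^\alpha t_x^{\alpha},
\]
not $\frac{\alpha}{1-\alpha}\inf_{\sigma_B}\log\sum_x p(x)^\alpha t_x$. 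With this correction, set $u_x := t_x^{\alpha}=\e^{(\alpha-1)D^*_\alpha(\rho_B^x\|\sigma_B)}$; the pointwise-in-$\sigma_B$ inequality you actually need is
\[
\log\sum_x p(x)^\alpha u_x \;\geq\; -(\alpha-1)H(p)+\sum_x p(x)\log u_x,
\]
which is \emph{exactly} what your Jensen step produces --- no leftover constant, and no restriction to $\alpha\le 2$. Your worry about the range of $\alpha$ is entirely an artifact of the missing exponent.

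The infimum discussion at the end is also tangled, and as written the first inequality of your final chain points the wrong way: since $H^*_\alpha = \sup_{\sigma_B}\big[-D^*_\alpha(\rho_{XB}\,\|\,\mathds{1}_X\otimes\sigma_B)\big]$, evaluating at any particular $\sigma_B^\star$ gives a \emph{lower} bound on $H^*_\alpha$, not an upper bound. The clean route (and the paper's) is to note that both $H(p)-\Breve I^*_\alpha(X{:}B)_\rho$ and $H^*_\alpha(X{\,|\,}B)_\rho$ are suprema over $\sigma_B$ of their respective pointwise quantities; once the pointwise inequality holds for every $\sigma_B$, taking $\sup_{\sigma_B}$ on both sides finishes the argument without choosing any minimizer.
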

\noindent The proof is deferred to Appendix~\ref{app:bound}.


\medskip
Similar to the achievability, we bound a strong converse bound for privacy amplification using Theorem~\ref{scconverse}.
\begin{theo}[Strong converse for privacy amplification on constant-type sources]\label{PAconverse}
For any $n\in\mathds{N}$, any $n$-type $p$, and a constant-type source
$\Breve{\rho}_{X^nB^n} = \sum_{x^n\in T^n_p} \frac{1}{|T^n_p|}\ket{x^n}\bra{x^n}\otimes \rho^{x^n}_{B^n}$ defined in \eqref{eq:const-type}, the following strong converse bound holds:
\begin{align}
&d_{\textnormal{PA}}(\Breve{\rho}_{X^nB^n}, R)\\
& \geq 1- 4(n+1)^{|\mathcal{X}|}\e^{-n\sup\limits_{\alpha\in (\sfrac12, 1)}\frac{1-\alpha}{\alpha}\left(\Breve{I}^{\uparrow}_{2-\sfrac{1}{\alpha}}(X:B)_\rho -\frac{\log|T^n_p|}{n}+R\right)}\\
&\geq 1- 4(n+1)^{|\mathcal{X}|}\e^{-n\sup\limits_{\alpha\in (\sfrac12, 1)}\frac{1-\alpha}{\alpha}\left(\Breve{I}^{\uparrow}_{2-\sfrac{1}{\alpha}}(X:B)_\rho -H(p)+R\right)}.
\end{align} 
\end{theo}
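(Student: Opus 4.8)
The plan is to derive Theorem~\ref{PAconverse} by combining the operational equivalence of Theorem~\ref{equiv} with the soft-covering strong converse of Theorem~\ref{scconverse}, then substituting the bounds on $|T^n_p|$ from \eqref{typebd}. First I would invoke Theorem~\ref{equiv} with the choice $\log|\mathcal{Z}^n| = nR$, which gives
\begin{align}
d_{\textnormal{PA}}(\Breve{\rho}_{X^nB^n}, R) = d_{\textnormal{SC}}\!\left(\Breve{\rho}_{X^nB^n}, \tfrac{1}{n}\log|T^n_p| - R\right).
\end{align}
So the soft-covering rate to plug in is $R_{\textnormal{SC}} = \tfrac1n\log|T^n_p| - R$. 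The size of the codebook without repetition is $|\Breve{\mathcal{C}}^n| = |T^n_p|/|\mathcal{Z}^n|$, and one should check $R_{\textnormal{SC}} \ge 0$ is not actually needed for the bound to be stated — the inequality in Theorem~\ref{scconverse} is valid for all rates, becoming vacuous when the exponent is nonnegative.

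Next I would apply Theorem~\ref{scconverse} directly at rate $R_{\textnormal{SC}}$, yielding
\begin{align}
d_{\textnormal{PA}}(\Breve{\rho}_{X^nB^n}, R) \geq 1 - 4(n+1)^{|\mathcal{X}|}\e^{-n\sup_{\alpha\in(\sfrac12,1)}\frac{1-\alpha}{\alpha}\left(\Breve{I}^{\uparrow}_{2-\sfrac1\alpha}(X:B)_\rho - \frac{\log|T^n_p|}{n} + R\right)}.
\end{align}
This is exactly the first displayed inequality. For the second, I would use the upper bound $|T^n_p| \le \e^{nH(p)}$ from \eqref{typebd}, i.e.\ $\tfrac1n\log|T^n_p| \le H(p)$. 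Since the exponent carries a factor $\tfrac{1-\alpha}{\alpha} > 0$ for $\alpha\in(\sfrac12,1)$ and the term $-\tfrac{\log|T^n_p|}{n}$ appears with a minus sign, replacing $\tfrac1n\log|T^n_p|$ by its upper bound $H(p)$ only decreases the quantity $\Breve{I}^{\uparrow}_{2-\sfrac1\alpha}(X:B)_\rho - \tfrac{\log|T^n_p|}{n} + R$, hence decreases the exponent $\tfrac{1-\alpha}{\alpha}(\cdots)$, hence increases $\e^{-n(\cdots)}$, hence decreases the lower bound — so the inequality is preserved. Taking the supremum over $\alpha$ commutes with this monotone substitution, giving the claimed second bound.

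There is essentially no serious obstacle here: the theorem is a straightforward corollary of the two previously established results (Theorems~\ref{equiv} and \ref{scconverse}) plus the elementary type-size estimate. The only point requiring a sentence of care is verifying the direction of the inequality when substituting $H(p)$ for $\tfrac1n\log|T^n_p|$ — one must track the sign of $\tfrac{1-\alpha}{\alpha}$ over the range $\alpha\in(\sfrac12,1)$ (it is positive) together with the overall minus sign in the exponential, to confirm that weakening $|T^n_p|$ to its upper bound weakens the converse bound in the correct direction. I would state this explicitly rather than leaving it implicit. No new machinery, no optimization arguments, and no delicate estimates beyond \eqref{typebd} are needed.
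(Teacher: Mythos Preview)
Your proposal is correct and follows exactly the paper's approach: the paper states that Theorem~\ref{PAconverse} is obtained ``similar to the achievability'' by combining the operational equivalence (Theorem~\ref{equiv}) with the soft-covering strong converse (Theorem~\ref{scconverse}) and the type-size bound~\eqref{typebd}. Your explicit sign-tracking for the substitution $\tfrac1n\log|T^n_p|\le H(p)$ is a welcome addition, since the paper leaves that step implicit.
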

\begin{remark}
Following the same reasoning in Proposition~\ref{prop:bound},
the derived exponent in terms of the Augustin information for the strong converse bound is larger than the exponent of our previous result \cite[Theorem~1]{SGC22a}, wherein the exponent is expressed by the Petz-type conditional R\'enyi entropy:
\begin{align}
\sup_{\alpha\in (\sfrac12, 1)}\frac{1-\alpha}{\alpha} \left( R - H_{2-\sfrac{1}{\alpha}}^\downarrow (X|B)_\rho  \right).
\end{align}
Here $H_\alpha^\downarrow (X|B)_\rho := - D_\alpha (\rho_{XB} \Vert \mathds{1}\otimes \rho_B)$ for  $\rho_{XB}= \sum_{x\in \mathcal{X}}p(x)\ket{x}\bra{x}\otimes \rho^x_B$.
However, there is an polynomial prefactor in Theorem~\ref{PAconverse} as expected.
\end{remark}

As mentioned in Section~\ref{sec:notation}, it holds that
\begin{align} \lim_{\alpha\to 1}
\Breve{I}_\alpha(X:B)_\rho = \Breve{I}_\alpha^\uparrow (X:B)_\rho = I(X:B)_\rho.
\end{align}
Theorems~\ref{PAdirect} and \ref{PAconverse} shows that the \emph{quantum conditional entropy}
\begin{align} \label{eq:conditional_entropy}
H(X|B)_\rho := H(p) - I(X:B)_\rho
\end{align}
serves as the fundamental limit of the maximal  extractable randomness as well as the exponential strong converse rate.

\subsection{Privacy amplification for i.i.d.~sources} \label{sec:i.i.d.}
Consider a $n$-fold i.i.d.~quantum state $\rho_{X^nB^n} = \rho^{\otimes n}_{XB} = \left(\sum_{x\in \mathcal{X}}p(x)\ket{x}\bra{x}\otimes \rho^x_B\right)^{\otimes n}$.
Denote $\rho^{(q)}_{X^nB^n}$ as a c-q state in which $x^n$ is distributed uniformly on $T^n_q$; namely $\rho^{(q)}_{X^nB^n}: = \frac{1}{|T^n_q|}\sum_{x^n\in T^n_q} \ket{x^n}\bra{x^n} \otimes \rho^{x^n}_{B^n}$.
The type decomposition of $\rho_{XB}^{\otimes n}$ is shown as follows:
\begin{align}
\rho^{\otimes n}_{XB} = \sum_{q\in \mathscr{P}_n(\mathcal{X})}\Pr[x^n\in T^n_q]\rho^{(q)}_{X^nB^n}. \label{rhoXBdecompose}
\end{align}
Similarly, the marginal state $\rho_B^{\otimes n}$ admits the type decomposition:
\begin{align}
\rho^{\otimes n}_{B} = \sum_{q\in \mathscr{P}_n(\mathcal{X})}\Pr[x^n\in T^n_q]\rho^{(q)}_{B^n}.\label{rhoBdecompose}
\end{align}
Using \eqref{rhoXBdecompose}, \eqref{rhoBdecompose}, as well as Theorem~\ref{PAdirect}, the minimum trace distance between the resulted state and the ideal state in privacy amplification on i.i.d.~sources can be upper bounded as follows.

\begin{prop}[Achievability for privacy amplification on i.i.d.~sources]\label{PAiid}
For any $n\in\mathds{N}$, an $n$-type $p$, and a classical-quantum state $\rho_{XB}= \sum_{x\in \mathcal{X}}p(x)\ket{x}\bra{x}\otimes \rho^x_B$, we have,
\begin{align}
& \min_{h^n}\frac{1}{2}\left\|\mathcal{R}^{h^n}(\rho^{\otimes n}_{XB})-\frac{1}{|\mathcal{Z}^n|}\sum_{z^n\in \mathcal{Z}^n}\ket{z^n}\bra{z^n}\otimes \rho_B^{\otimes n}\right\|_1\\
&\leq (n+1)^{\frac{3}{2}|\mathcal{X}|}\\
&\quad \cdot \e^{-n \min\limits_{q\in \mathscr{P}_n(\mathcal{X})}\sup\limits_{\alpha\in(1,2)}D(q \Vert p)+\frac{\alpha-1}{\alpha}\left( H(q) - \Breve{I}^*_{\alpha}(X:B)_{\rho^{(q)}}- R \right)},
\end{align}
where $\rho^{(q)}_{XB} = \sum_{x\in \mathcal{X}}q(x)\ket{x}\bra{x}\otimes \rho^x_B$.
\end{prop}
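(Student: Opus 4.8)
The plan is to combine the type decomposition of the i.i.d.\ state with the achievability bound for constant-type sources (Theorem~\ref{PAdirect}), absorbing the mismatch between the true prior $p$ and each type $q$ into a Kullback--Leibler divergence term coming from the probability weight $\Pr[x^n \in T^n_q]$. The starting observation is that any fixed hash function $h^n : \mathcal{X}^n \to \mathcal{Z}^n$ acts linearly, so applying $\mathcal{R}^{h^n}$ to \eqref{rhoXBdecompose} and using \eqref{rhoBdecompose} together with the triangle inequality for the trace norm gives
\begin{align}
&\left\|\mathcal{R}^{h^n}(\rho^{\otimes n}_{XB})-\tfrac{\mathds{1}_{Z^n}}{|\mathcal{Z}^n|}\otimes \rho_B^{\otimes n}\right\|_1 \nonumber\\
&\leq \sum_{q\in \mathscr{P}_n(\mathcal{X})}\Pr[x^n\in T^n_q]\left\|\mathcal{R}^{h^n}(\rho^{(q)}_{X^nB^n})-\tfrac{\mathds{1}_{Z^n}}{|\mathcal{Z}^n|}\otimes \rho_B^{(q)}\right\|_1. \nonumber
\end{align}
The key subtlety is the order of the minimisation over $h^n$: the left-hand side is a single trace distance for a \emph{fixed} $h^n$, whereas Theorem~\ref{PAdirect} bounds $\mathds{E}_{h^n}$ of the per-type quantity. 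So the first step is to pass to expectations: take $\mathds{E}_{h^n}$ on both sides, use linearity of expectation to move it inside the sum, and invoke $\min_{h^n}(\cdot) \leq \mathds{E}_{h^n}(\cdot)$ on the left so that the final bound is on the minimum.

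Next I would apply Theorem~\ref{PAdirect} to each term $\mathds{E}_{h^n}\|\mathcal{R}^{h^n}(\rho^{(q)}_{X^nB^n})-\tfrac{\mathds{1}_{Z^n}}{|\mathcal{Z}^n|}\otimes \rho_B^{(q)}\|_1 = 2\,d_{\textnormal{PA}}(\rho^{(q)}_{X^nB^n},R)$, noting that $\rho^{(q)}_{X^nB^n}$ is exactly a constant-type source of type $q$ built from the same c-q channel $x \mapsto \rho_B^x$, so that its induced single-letter state is $\rho^{(q)}_{XB} = \sum_x q(x)\ket{x}\bra{x}\otimes \rho_B^x$. This yields, for each $q$,
\begin{align}
d_{\textnormal{PA}}(\rho^{(q)}_{X^nB^n},R) \leq (n+1)^{\frac12|\mathcal{X}|}\e^{-n \sup_{\alpha\in(1,2)}\frac{\alpha-1}{\alpha}\left(H(q) - \Breve{I}^*_{\alpha}(X:B)_{\rho^{(q)}}- R\right)}. \nonumber
\end{align}
Then I would bound the type weight using the standard type-counting estimate $\Pr[x^n\in T^n_q] = |T^n_q|\,p^{\otimes n}(T^n_q\text{-element}) \leq \e^{-nD(q\Vert p)}$ (which follows from $|T^n_q|\le \e^{nH(q)}$ in \eqref{typebd} and $p^{\otimes n}(x^n) = \e^{-n(H(q)+D(q\Vert p))}$ for $x^n\in T^n_q$). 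Multiplying the two exponential bounds combines the exponents additively, giving for each $q$ a bound of the form $(n+1)^{\frac12|\mathcal{X}|}\e^{-n[D(q\Vert p) + \sup_\alpha \frac{\alpha-1}{\alpha}(H(q)-\Breve{I}^*_\alpha(X:B)_{\rho^{(q)}}-R)]}$.

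Finally I would control the sum over types: there are at most $(n+1)^{|\mathcal{X}|}$ types by \eqref{typenum}, and bounding each summand by the largest gives a factor $(n+1)^{|\mathcal{X}|}$, which combines with the $(n+1)^{\frac12|\mathcal{X}|}$ prefactor to yield the claimed $(n+1)^{\frac32|\mathcal{X}|}$; the exponent becomes $\min_{q\in\mathscr{P}_n(\mathcal{X})}\sup_{\alpha\in(1,2)}\big[D(q\Vert p) + \frac{\alpha-1}{\alpha}(H(q)-\Breve{I}^*_\alpha(X:B)_{\rho^{(q)}}-R)\big]$, and dividing by $2$ accounts for the $\tfrac12$ in the definition of $d_{\textnormal{PA}}$ versus the raw trace norm on the left-hand side of the proposition. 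The main obstacle I anticipate is purely bookkeeping rather than conceptual: making sure the $\min_{h^n}$ versus $\mathds{E}_{h^n}$ exchange is done in the right direction (one cannot pull $\min_{h^n}$ inside the sum, so the argument must keep a common random $h^n$ across all types and only use $\min\le\mathds{E}$ at the very end), and tracking the polynomial prefactors carefully through the product of the two type-dependent exponential bounds and the sum over the polynomially many types.
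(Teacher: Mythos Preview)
Your overall strategy---type decomposition, triangle inequality, per-type application of Theorem~\ref{PAdirect}, then the bounds $\Pr[x^n\in T^n_q]\leq \e^{-nD(q\Vert p)}$ and $|\mathscr{P}_n(\mathcal{X})|\leq (n+1)^{|\mathcal{X}|}$---is exactly the paper's. There is, however, one step that does not work as stated. You assert
\[
\mathds{E}_{h^n}\left\|\mathcal{R}^{h^n}(\rho^{(q)}_{X^nB^n})-\tfrac{\mathds{1}_{Z^n}}{|\mathcal{Z}^n|}\otimes \rho^{(q)}_{B^n}\right\|_1 = 2\,d_{\textnormal{PA}}(\rho^{(q)}_{X^nB^n},R),
\]
but $d_{\textnormal{PA}}(\rho^{(q)}_{X^nB^n},R)$ is defined via a regular random binning $T^n_q\to\mathcal{Z}^n$, whereas the restriction to $T^n_q$ of a regular random binning $h^n:\mathcal{X}^n\to\mathcal{Z}^n$ is \emph{not} regular (the bin sizes within $T^n_q$ are random), so Theorem~\ref{PAdirect} does not directly apply to the left-hand side. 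The paper sidesteps this by exploiting that the type classes partition $\mathcal{X}^n$: any tuple $\{h_q:T^n_q\to\mathcal{Z}^n\}_q$ of per-type hash functions assembles into a single $h^n$, whence $\min_{h^n}\sum_q(\cdots)\leq\sum_q\min_{h_q}(\cdots)$---so, contrary to your caution, one \emph{can} pull the minimum inside the sum here---and then $\min_{h_q}\leq\mathds{E}_{h_q}$ yields $d_{\textnormal{PA}}(\rho^{(q)}_{X^nB^n},R)$ per type. Equivalently, your argument is repaired by taking the random $h^n$ to be the concatenation of independent regular random binnings on each type class; with that choice your claimed equality holds and the remaining bookkeeping (prefactors and exponent) is exactly as you describe.
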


Below we explain the error exponent achieved in Proposition~\ref{PAiid}.
Via an entropic relation established in \cite[Theorem~10]{CHDH2-2018}:
\begin{align}
&\min\limits_{q\in \mathscr{P}(\mathcal{X})}\sup\limits_{\alpha\in(1,2)}D(q \Vert p)+\frac{\alpha-1}{\alpha}\left( H(q) \!-\! \Breve{I}^*_{\alpha}(X\!:\!B)_{\rho^{(q)}} \!-\! R \right)
\notag
\\
&= \frac{\alpha-1}{\alpha}\left( H_\alpha^*(X\mid B)_\rho - R \right).
\end{align}
In other words, the secrecy exponent for privacy amplification shown in Proposition~\ref{PAiid} matches that of the previous work \cite[Theorem~8]{Dup21}, which is \eqref{eq:Dup21}. 
This then provides an alternative proof to achieving the same secrecy exponent of privacy amplification on i.i.d.~sources (in the asymptotic limit). 
Note that the result in \cite{Dup21} is still stronger since there is a polynomial prefactor in Proposition~\ref{PAiid}.

\begin{proof}[Proof of Proposition~\ref{PAiid}]
The derivation of our claim is as follows,
\begin{align}
& \min_{h^n}\frac{1}{2}\left\|\mathcal{R}^{h^n}(\rho^{\otimes n}_{XB})-\frac{1}{|\mathcal{Z}^n|}\sum_{z^n\in \mathcal{Z}^n}\ket{z^n}\bra{z^n}\otimes \rho_B^{\otimes n}\right\|_1\\
& = \min_{h^n}\frac{1}{2}\left\|\sum_{q\in \mathscr{P}_n(\mathcal{X})}\Pr\left[x^n\in T^n_q\right]\right.\\
&\quad \quad \quad \cdot \left.\left(\mathcal{R}^{h^n}\left(\rho^{(q)}_{X^nB^n}\right) -\frac{1}{|\mathcal{Z}^n|}\sum_{z^n\in \mathcal{Z}^n}\ket{z^n}\bra{z^n}\otimes \rho^{(q)}_{B^n}\right)\right\|_1\\
& \overset{\textnormal{(a)}}{\leq} \sum_{q\in \mathscr{P}_n(\mathcal{X})}\Pr\left[x^n\in T^n_q\right] \\
&\quad \quad \cdot\frac{1}{2}\min_{h_q}
\bigg\|\mathcal{R}^{h_q}(\rho^{(q)}_{X^nB^n})-\frac{1}{|\mathcal{Z}^n|}\sum_{z^n\in \mathcal{Z}^n}\ket{z^n}\bra{z^n}\otimes \rho^{(q)}_{B^n}\bigg\|_1\\
& \overset{\textnormal{(b)}}{\leq} \sum_{q\in \mathscr{P}_n(\mathcal{X})}\Pr\left[x^n\in T^n_q\right]d_{\text{PA}}\left(\rho^{(q)}_{X^nB^n}, \log|\mathcal{Z}^n|\right)\\
& \leq \left|\mathcal{P}_n(\mathcal{X}^n)\right| \max_{q\in \mathscr{P}_n(\mathcal{X})}\Pr\left[x^n\in T^n_q\right]d_{\text{PA}}\left(\rho^{(q)}_{X^nB^n}, \log|\mathcal{Z}^n|\right)\\
& \overset{\textnormal{(c)}}{\leq}  \left|\mathcal{P}_n(\mathcal{X}^n)\right| \\
& \quad\cdot\max_{q\in \mathscr{P}_n(\mathcal{X})} \e^{-n \sup\limits_{\alpha\in(1,2)}D( q \Vert p)+\frac{\alpha-1}{\alpha}\left( \log\frac{|T^n_q|}{n} - \Breve{I}^*_{\alpha}(X:B)_{\rho^{(q)}}- R \right)},
\end{align}
where $h_q: T^n_q \to \mathcal{Z}$ is the regular random binning function associated to the type $q$.
For inequality (a), we use triangle inequality and that the set of all combined function $\{\mathcal{R}^{h_q} : q\in \mathscr{P}_n(\mathcal{X})\}$ is a subset of all set of 
 ${\mathcal{R}^{h^n}}$. 
 In (b), note that the expectation value is at least as large as minimum value.
 In (c), we applied Theorem~\ref{PAdirect}.
 The proof is concluded by invoking~\eqref{typebd} and \eqref{typenum}.
\end{proof}

\section{Application: the wiretap channel coding}\label{wiretap}
We now apply our results to classical-quantum (c-q) wiretap channels and obtain an estimate of the information leak to the quantum eavesdropper. 
Consider a c-q wiretap channel $\mathcal{N}_{X\to BE}: x \mapsto \sigma_{BE}^x$,
where Alice sends a classical symbol $x\in \mathcal{X}$, and the channel output states received by Bob and Eve are respectively the marginal states $\sigma_B^x \in \mathcal{S}(\mathcal{H}_B)$ and $\sigma_E^x \in \mathcal{S}(\mathcal{H}_E)$. The goal of Alice is to communicate equiprobable classical messages from a message set $[M]$ to Bob over $\mathcal{N}_{X\to BE}$, without leaking too much information to the eavesdropper, Eve.
%
%
In the following, we provide a protocol for this purpose. 
Without loss of generality, we consider any finite blocklength $n\in \mathds{N}$ and any $n$-type $p$.
\begin{enumerate}[label=\arabic*.]
    \item\label{label:3} 
    Let $f: T^n_p \to [M]\times [L]\times [K]$ be a random uniform one-to-one mapping, where $T^n_p$ is the type class of $p$. 
    Alice and Bob generate the function $f$ and claim it publicly. 
    \item Alice and Bob randomly choose a $k\in [K]$, claim it publicly, and fix it when using $f$ in next steps.
    \item For the message $m\in[M]$ Alice wanted to send, she uniformly chooses an element $x^n$ from $\left\{x^n:f(x^n) = (m,\cdot,k) \right\}$. 
    \item Alice transmit $x^n$ through the c-q wiretap channel $\mathcal{N}^{\otimes n}$.
    \item Upon receiving the channel output state, Bob performs a positive operator-valued measure $\Pi := \{\Pi^{m,l}_B\}_{m,l\in[ML]}$ to obtain an outcome $\hat{m},\hat{l}$ so as to get the message $\hat{m}$.
\end{enumerate}
Let $\log{M} = nR$, $\log{L} = nR_1$, $\log{K} = nR_2$.
The error probability of Bob's decoding and the Eve's distinguishability for different messages are defined as  
\begin{align}
        &\varepsilon(f,k) := \frac{1}{M}\sum_{m\in [M]}\Pr[\hat{m}\neq m];\\
		&d_{\text{WT}}(f,k) \\
  & := \frac{1}{2}\bigg\| \frac{1}{M}\sum_{m}\ket{m}\bra{m}\otimes\bigg(\frac{1}{L}\sum_{x^n:f(x^n)= (m,\cdot,k)}\sigma^{x^n}_{E^n}\bigg) \\
 &\quad - \bigg(\frac{1}{M}\sum_{m}\ket{m}\bra{m}\bigg)\otimes \bigg(\frac{1}{ML}\sum_{x^n:f(x^n)= (\cdot,\cdot,k)}\sigma^{x^n}_{E^n}\bigg) \bigg\|_1.
\end{align}
We obtain the following result.
\begin{theo}[Secrecy exponent for wiretap channel coding] \label{theo:wireach}
	Consider a classical-quantum wiretap channel $x\mapsto \sigma_{BE}^x$.
    For any probability distribution $p$ and $\sigma_{XBE} := \sum_{x\in\mathcal{X}} p(x) |x\rangle \langle x| \otimes \sigma_{BE}^x$, let $R < I(X:B)_{\sigma}$.
    Then, there exists a sequence of constant composition codes such that $\mathds{E}_{f,k} [\varepsilon]$ asymptotically vanishes and
	\begin{align}
		&\lim_{n\to \infty}-\frac{1}{n}\log \mathds{E}_{f,k} \left[d_{\textnormal{WT}}(f,k) \right] \\
  &\geq  \sup_{\alpha\in(1,2)}\frac{\alpha-1}{\alpha}\left(I(X:B)_\sigma - \Breve{I}^*_{\alpha}(X:E)_{\sigma}- R\right).
	\end{align}
    Here, 
    the expectation is respect to choosing $f: T^n_p \to [M]\times [L] \times [K]$ 
in step $1$ and $k$ in step $2$ randomly uniformly.

    Moreover, the error exponent is positive if and only if $R < I(X:B)_\sigma - I(X:E)_\sigma$.
\end{theo}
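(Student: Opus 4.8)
The plan is to view the protocol as the composition of an ordinary constant‑composition channel code (so that Bob can decode the pair $(m,l)$, hence $m$) with privacy amplification of $m$ against Eve, and then to bound the secrecy quantity $d_{\textnormal{WT}}$ \emph{entirely} through the soft‑covering‑without‑repetition bound of Theorem~\ref{scdirect}. First I would reduce to the case that $p$ is an $n$‑type: a general prior is handled at the end by taking $n$‑types $p^{(n)}\to p$ (total variation $O(1/n)$) and invoking continuity of $I(X:B)_\sigma$, $I(X:E)_\sigma$ and $\Breve{I}^*_\alpha(X:E)_\sigma$. Given $R<I(X:B)_\sigma$, I fix a small $\delta>0$ and set the block rates so that $\log M+\log L+\log K=\log|T^n_p|$ (i.e.\ $K=|T^n_p|/(ML)$) with $R+R_1=I(X:B)_\sigma-\delta$; then $R_1>0$ and $K\ge 1$ for all large $n$, because $I(X:B)_\sigma\le H(p)=\lim_n \tfrac1n\log|T^n_p|$. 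Divisibility is absorbed into sub‑exponential factors, and a final diagonal argument lets $\delta\downarrow0$.

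\emph{Reliability.} For each fixed $f,k$ the admissible codewords $\mathcal{C}_k:=\{x^n: f(x^n)=(\cdot,\cdot,k)\}$ are $ML=\e^{n(R+R_1)}$ \emph{distinct} sequences drawn uniformly \emph{without replacement} from $T^n_p$, labelled bijectively by $(m,l)$. Since $\hat m=m$ whenever the decoder recovers the pair $(m,l)$, Bob faces a standard constant‑composition c‑q channel‑coding problem at rate $R+R_1=I(X:B)_\sigma-\delta<I(X:B)_\sigma$; the Holevo–Schumacher–Westmoreland theorem for fixed‑composition random codes (drawing without replacement only lowers the average error relative to i.i.d.\ drawing) gives $\mathds{E}_{f,k}[\varepsilon]=\mathds{E}_k\,\mathds{E}_{\mathcal{C}_k}[\varepsilon]\to 0$.

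\emph{Secrecy.} Insert the type‑class average $\Breve{\rho}_{E^n}=\tfrac1{|T^n_p|}\sum_{x^n\in T^n_p}\sigma^{x^n}_{E^n}$ by the triangle inequality, $d_{\textnormal{WT}}(f,k)\le T_1(f,k)+T_2(f,k)$, where, using orthogonality of the $\ket m$,
\begin{align}
T_1(f,k)&=\frac1M\sum_m \frac12\Big\|\frac1L\!\!\sum_{x^n:\,f(x^n)=(m,\cdot,k)}\!\!\!\sigma^{x^n}_{E^n}-\Breve{\rho}_{E^n}\Big\|_1,\\
T_2(f,k)&=\frac12\Big\|\Breve{\rho}_{E^n}-\frac1{ML}\sum_{x^n\in\mathcal{C}_k}\sigma^{x^n}_{E^n}\Big\|_1.
\end{align}
The crucial observation is that, under the uniform random bijection $f$, for every fixed $(m,k)$ the slice $\{x^n:f(x^n)=(m,\cdot,k)\}$ is itself a uniformly random $L$‑subset of $T^n_p$ (even though the $M$ slices do not cover $T^n_p$), and $\mathcal{C}_k$ is a uniformly random $ML$‑subset. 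Hence $\mathds{E}_{f,k}[T_1]$ and $\mathds{E}_{f,k}[T_2]$ are \emph{exactly} soft‑covering‑without‑repetition errors in the sense of Definition~\ref{defn:sc} for the c‑q channel $x\mapsto\sigma^x_E$ and prior $p$, namely $\mathds{E}_{f,k}[T_1]=d_{\textnormal{SC}}(\Breve{\rho}_{X^nE^n},R_1)$ and $\mathds{E}_{f,k}[T_2]=d_{\textnormal{SC}}(\Breve{\rho}_{X^nE^n},R+R_1)$. Applying Theorem~\ref{scdirect} to both, and noting $R\ge0$ makes the $T_2$‑exponent no smaller than the $T_1$‑one, gives $\mathds{E}_{f,k}[d_{\textnormal{WT}}]\le 2\,\e^{-n\sup_{\alpha\in(1,2)}\frac{\alpha-1}{\alpha}(R_1-\Breve{I}^*_\alpha(X:E)_\sigma)}$. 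Thus $\liminf_n -\tfrac1n\log\mathds{E}_{f,k}[d_{\textnormal{WT}}]\ge\sup_{\alpha\in(1,2)}\frac{\alpha-1}{\alpha}(R_1-\Breve{I}^*_\alpha(X:E)_\sigma)$, and letting $\delta\downarrow0$, so $R_1\uparrow I(X:B)_\sigma-R$, yields the stated exponent.

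\emph{Positivity and the main obstacle.} For $\alpha\in(1,2)$ the factor $\tfrac{\alpha-1}{\alpha}$ is positive, so the supremum is positive iff $R<I(X:B)_\sigma-\Breve{I}^*_\alpha(X:E)_\sigma$ for some such $\alpha$. Since $D^*_\alpha$ is non‑decreasing in $\alpha$, so is $\Breve{I}^*_\alpha(X:E)_\sigma=\inf_{\sigma_E}\sum_x p(x)D^*_\alpha(\sigma^x_E\Vert\sigma_E)$, whence $\inf_{\alpha\in(1,2)}\Breve{I}^*_\alpha(X:E)_\sigma=\lim_{\alpha\downarrow1}\Breve{I}^*_\alpha(X:E)_\sigma=I(X:E)_\sigma$; therefore the exponent is positive exactly when $R<I(X:B)_\sigma-I(X:E)_\sigma$. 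The conceptual crux of the whole argument is recognising the two soft‑covering substructures hidden in $d_{\textnormal{WT}}$ — in particular that each message slice is on its own a uniform random $L$‑subset of $T^n_p$ — after which Theorem~\ref{scdirect} does all the work; the only remaining care is the reliability step with codewords drawn without replacement and the bookkeeping ensuring $MLK=|T^n_p|$ with $K\ge1$ (which rests on $I(X:B)_\sigma\le H(p)$), plus the routine $n$‑type approximation for a non‑type prior.
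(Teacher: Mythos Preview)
Your proof is correct and follows essentially the same approach as the paper: insert $\breve{\rho}_{E^n}$ via the triangle inequality, recognise the two resulting terms as soft-covering-without-repetition errors for random subsets of $T^n_p$ of sizes $L$ and $ML$, apply Theorem~\ref{scdirect}, and let $\delta\downarrow0$. The paper phrases the identical step through the privacy-amplification formulation instead---writing $f$ as a regular random binning $T^n_p\to[MK]$ (resp.\ $T^n_p\to[K]$) and invoking Theorem~\ref{PAdirect} at rates $R+R_2$ and $R_2$, which by Theorem~\ref{equiv} is exactly your $d_{\textnormal{SC}}$ at rates $R_1$ and $R+R_1$---and cites a specific constant-composition channel-coding bound for the reliability step rather than your without-replacement heuristic, but the substance is the same.
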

\begin{remark}
Theorem~\ref{theo:wireach} generalizes \cite[Theorem 28]{Bei18} from the scenario of classical side information to that of quantum side information. 
Further, our result enjoys two merits.
Firstly, we sharpen the sub-exponential terms $\e^{o(n)}$ in \cite[Theorem 28]{Bei18} to the polynomial prefactor $(n+1)^{\frac12|\mathcal{X}|}$ since we directly obtain the Augustin information without resorting to a regularization property given in \cite[Lemma 27]{Bei18}.
Secondly, we allow quantum side information to be on an infinite-dimensional Hilbert space since we do not require joint typicality.
\end{remark}

\noindent The key to proving this theorem is to let $R+R_1$ be just small enough so that Bob can decode $(m,l)$ with high accuracy using constant-composition codes. 
Plus, the function $f$ can be seen as an $L$-to-$1$ regular random binning function mapping from $T^n_p$ to $[MK]$ in order to extract randomness. 

\begin{proof}[Proof of Theorem~\ref{theo:wireach}]
We adopt random constant composition codes for c-q channel coding for message rate $R+R_1$ by
\cite[Proposition~A.1]{CHDH2-2018}. 
For $R+ R_1 < I(X:B)_{\sigma}$, Bob can decode the message $(m,l)$ (and accordingly $m$) with average error probability bounded as follows:
\begin{align}
\mathds{E}_{f,k} [\varepsilon] \leq 6(n+1)^{|\mathcal{X}|}\e^{-n\sup_{\alpha\in (\sfrac12, 1)}\frac{1-\alpha}{\alpha}\left(\Breve{I}^{\uparrow}_{2-\sfrac{1}{\alpha}}(X:B)_\sigma -R - R_1 \right)}.
\end{align}
Since $f$ acts as a random binning function that maps $T^n_p$ to $[MK]$ (or $[K]$), we invoke Theorem~\ref{PAdirect} to obtain that for $\Breve{\sigma}_{X^nE^n} = \sum_{x^n\in T^n_p} \frac{1}{|T^n_p|}\ket{x^n}\bra{x^n}\otimes \sigma^{x^n}_{E^n}$,
\begin{align}
&d_{\text{PA}}(\Breve{\sigma}_{X^nE^n}, R+R_2) \leq \e^{-n \sup\limits_{\alpha\in(1,2)}\frac{\alpha-1}{\alpha}\left(\frac{\log|T^n_p|}{n} - \Breve{I}^*_{\alpha}(X:E)_{\sigma}- R-R_2\right)}\\
&d_{\text{PA}}(\Breve{\sigma}_{X^nE^n}, R_2) \leq
\e^{-n \sup\limits_{\alpha\in(1,2)}\frac{\alpha-1}{\alpha}\left(\frac{\log|T^n_p|}{n} - \Breve{I}^*_{\alpha}(X:E)_{\sigma}-R_2\right)}.
\end{align}
Moreover, 
\begin{align}
&\mathds{E}_{f,k} d_{\text{WT}}(f,k)\\
& = \mathds{E}_{f} \frac{1}{K}\sum_{k} d_{\text{WT}}(f,k)\\
 & \leq \frac{1}{2}\mathds{E}_{f}\bigg\| \frac{1}{MK}\sum_{m,k}\ket{m,k}\bra{m,k}\otimes\bigg(\frac{1}{L}\sum_{x^n:f(x^n)= (m,\cdot,k)}\sigma^{x^n}_{E^n}\bigg) \\
 &\quad - \bigg(\frac{1}{MK}\sum_{m,k}\ket{m,k}\bra{m,k}\bigg)\otimes \bigg(\frac{1}{|T^n_p|}\sum_{x^n\in T^n_p}\sigma^{x^n}_{E^n} \bigg) \bigg\|_1\\
 &\quad + \frac{1}{2}\mathds{E}_{f}\bigg\| \frac{1}{K}\sum_{k}\ket{k}\bra{k}\otimes\bigg(\frac{1}{ML}\sum_{x^n:f(x^n)= (\cdot,\cdot,k)}\sigma^{x^n}_{E^n}\bigg) \\
 &\quad - \bigg(\frac{1}{K}\sum_{k}\ket{k}\bra{k}\bigg)\otimes \bigg(\frac{1}{|T^n_p|}\sum_{x^n\in T^n_p}\sigma^{x^n}_{E^n} \bigg) \bigg\|_1\\
 & =  d_{\text{PA}}(\Breve{\sigma}_{X^nE^n}, R+R_2) + d_{\text{PA}}(\Breve{\sigma}_{X^nE^n}, R_2),
\end{align}
where in the inequality we use triangle inequality on $d_{\text{WT}}(f,k)$ and equivalent adjustment of  classical systems $m,k$ under trace norm.
By setting $R_2 = \frac1n \log|T^n_p| - I(X:B)_{\sigma} +\delta$ where $\delta>0$ is small, we have that $\lim_{n\to\infty}\mathds{E}_{f,k}[\varepsilon] =0$ and that
\begin{align}
&\lim_{n\to \infty}-\frac{1}{n}\log \mathds{E}_{f,k} \left[d_{\textnormal{WT}}(f,k) \right]\\
& \geq  \sup_{\alpha\in(1,2)}\frac{\alpha-1}{\alpha}\left(I(X:B)_\sigma - \Breve{I}^*_{\alpha}(X:E)_{\sigma}- R\right)
\end{align}
The proof is done by taking $\delta \to 0$.
\end{proof}

\section{Conclusion}\label{conclude}
We exploit the regular random binning method in quantum information-theoretic security. 
For privacy amplification against quantum side information when constant-type sources are being used, we establish the associated secrecy exponents in terms of quantum Augustin information. We also recover an error exponent for i.i.d. sources that matches well to existed results. Moreover, our result of privacy amplification applies to classical-quantum wiretap channel coding where the eavesdropper may access to quantum side information,

\newpage
\bibliographystyle{myIEEEtran}
\bibliography{reference.bib}

\appendix
\subsection{Proof of Theorem~\ref{scdirect}} \label{sec:proof_scdirect}
In order to proof theorem~\ref{scdirect}, we need the following lemma for bounding how randomly the method of drawing elements without replacement can be.
Let $\Omega$ be a product space. Furthermore, let $\Omega^{M} = \{(\omega_1, \cdots, \omega_M):( \forall i\in \{1,\dots,M\}, \omega_i \in \Omega) \wedge (\forall i\neq j,  \omega_i \neq \omega_j)\}$ be an $M$-fold product space s.t. each element $\omega \in \Omega$ cannot be picked more than once. Let $L_{\infty}(\Omega, \mu)$ and $L_{\infty}(\Omega^M, \mu^M)$ be probability spaces each on $\Omega$ and $\Omega^M$. 
Define the following maps:
\begin{align}
    &\pi_i : L_{\infty}(\Omega, \mu) \to L_{\infty}(\Omega^M, \mu^M),\\
    &\pi_i(f)(\omega_1, \dots,\omega_M) = f(\omega_i),\\
    &E: L_{\infty}(\Omega, \mu) \to L_{\infty}(\Omega^M, \mu^M),\\
    &E(f)(\omega_1,\dots,\omega_M) = \int_{\Omega^M}\frac{1}{M}\sum_{i=1}^M f(\omega_i) d\mu^M,\\
    &\Theta : L_{\infty}(\Omega, \mu) \to L_{\infty}(\Omega^M, \mu^M),\\
    & \Theta(f) = \frac{1}{M}\sum_{i=1}^M \pi_i(f) - E(f).
\end{align}
$\pi_i(f)$ is, $\omega_i$, the $i$-th element $\Omega^M$, acting on function $f$. Note that 
\begin{align}
E(f) =    \int_{\Omega^M}\frac{1}{M}\sum_{i=1}^M f(\omega_i) d\mu^M =\int_{\Omega} f(\omega) d\mu(\omega).
\end{align}
That is, the expectation of $E(f)$ is also the expectation of $f$ on $\Omega$ with uniform distribution.

\begin{lemm}\label{interpolation}
For any Hilbert space $\mathcal{H}$ and $1\leq p\leq 2$, 
\begin{align}
\left\|\Theta \otimes \text{id}: L_p(\Omega, S_p(\mathcal{H})) \to L_p(\Omega^M, S_p(\mathcal{H})) \right\| \leq 2^{\frac{2}{p}-1}M^{\frac{1-p}{p}},
\end{align}
where id is acting on $S_p(\mathcal{H})$. Here, whenever we mention $\Omega$ (or $\Omega^M$), we automatically imply the probability measure being $\mu$ (or $\mu^M$).
\end{lemm}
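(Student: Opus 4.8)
The plan is to bound $\Theta\otimes\mathrm{id}$ at the two endpoints $p=1$ and $p=2$ directly, and then fill in the range $1<p<2$ by complex interpolation; note that the target bound $2^{2/p-1}M^{(1-p)/p}$ is exactly the geometric interpolation of the values $2$ at $p=1$ and $M^{-1/2}$ at $p=2$. For $p=1$: since $\Theta(f)=\frac1M\sum_{i=1}^M\pi_i(f)-E(f)$, the triangle inequality in $L_1(\Omega^M,S_1(\mathcal H))$ gives $\|(\Theta\otimes\mathrm{id})(f)\|_1\le\frac1M\sum_i\|\pi_i(f)\|_1+\|E(f)\|_1$; by the symmetry of sampling without replacement the one-coordinate marginal of $\mu^M$ is $\mu$, so $\|\pi_i(f)\|_{L_1(\Omega^M,S_1)}=\|f\|_{L_1(\Omega,S_1)}$, and since $E(f)$ is the constant function $\int_\Omega f\,\d\mu$, convexity of $\|\cdot\|_{S_1}$ gives $\|E(f)\|_1\le\|f\|_1$. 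Hence $\|\Theta\otimes\mathrm{id}: L_1\to L_1\|\le 2$.

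For $p=2$, where $L_2(\Omega,S_2(\mathcal H))$ is a Hilbert space, write $g_i:=\pi_i(f)-E(f)$, a mean-zero element of $L_2(\Omega^M,S_2)$, and expand $\|(\Theta\otimes\mathrm{id})(f)\|_2^2=\frac1{M^2}\sum_i\langle g_i,g_i\rangle+\frac1{M^2}\sum_{i\ne j}\langle g_i,g_j\rangle$. Using once more that each single coordinate is marginally $\mu$, one gets $\langle g_i,g_i\rangle=\|f\|_2^2-\|E(f)\|_2^2\le\|f\|_2^2$. The crucial step---the point at which the ``without repetition'' structure of Definition~\ref{defn:sc} is exploited---is that the cross terms are \emph{nonpositive} rather than merely zero: for $i\ne j$ the two-coordinate marginal of $\mu^M$ is (for finite uniform $\Omega$) the uniform law on off-diagonal pairs, so $\langle g_i,g_j\rangle=\frac1{N(N-1)}\sum_{a\ne b}\langle f(a),f(b)\rangle_{S_2}-\|E(f)\|_2^2$ with $N=|\Omega|$, and a one-line rearrangement shows that $\langle g_i,g_j\rangle\le 0$ is equivalent to the Cauchy--Schwarz inequality $\bigl\|\sum_\omega f(\omega)\bigr\|_{S_2}^2\le N\sum_\omega\|f(\omega)\|_{S_2}^2$. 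Discarding the nonpositive off-diagonal sum gives $\|(\Theta\otimes\mathrm{id})(f)\|_2^2\le\frac1M\|f\|_2^2$, i.e.\ $\|\Theta\otimes\mathrm{id}: L_2\to L_2\|\le M^{-1/2}$.

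Finally, the spaces $\{L_p(\Omega,S_p(\mathcal H))\}_{1\le p\le2}$ are tracial noncommutative $L_p$-spaces (of $L_\infty(\Omega,\mu)$ tensored with $B(\mathcal H)$) and therefore form a complex interpolation scale, $[L_1(\Omega,S_1),L_2(\Omega,S_2)]_\theta=L_p(\Omega,S_p)$ with $\tfrac1p=1-\tfrac\theta2$, i.e.\ $\theta=\tfrac{2(p-1)}p$; the same identity holds on $\Omega^M$. Since $\Theta\otimes\mathrm{id}$ is bounded at both endpoints, the interpolation theorem for the complex method yields $\|\Theta\otimes\mathrm{id}: L_p\to L_p\|\le 2^{1-\theta}(M^{-1/2})^{\theta}=2^{2/p-1}M^{(1-p)/p}$, using $1-\theta=2/p-1$ and $\theta/2=(p-1)/p$, which is the claim. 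The main obstacle is the $p=2$ estimate: proving that the cross-moment $\langle g_i,g_j\rangle$ is actually nonpositive---this is what makes drawing codewords without repetition at least as good as a pairwise-independent codebook---is the one place where a genuine computation is needed, whereas the $p=1$ bound is a one-line triangle inequality and the interpolation step is a routine invocation of the standard noncommutative $L_p$ interpolation identity, which should be cited rather than reproved.
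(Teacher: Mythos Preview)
Your proposal is correct and follows the paper's proof closely: bound $\Theta\otimes\mathrm{id}$ by $2$ at $p=1$ via the triangle inequality, by $M^{-1/2}$ at $p=2$ via nonpositive cross-correlation of the centered variables $g_i$, then complex-interpolate with $\theta=2(p-1)/p$. The paper isolates the cross-moment estimate as a separate Lemma~\ref{orthogonal} (proved by comparing to i.i.d.\ sampling and assuming $f$ positive semidefinite), whereas your direct reduction to $\bigl\|\sum_\omega f(\omega)\bigr\|_{S_2}^2\le N\sum_\omega\|f(\omega)\|_{S_2}^2$ is slightly cleaner and dispenses with that positivity hypothesis.
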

\begin{proof}
For each $i$,
\begin{align}
&\left\|\pi_i\otimes \text{id} (f)\right\|^p_{ L_p(\Omega^M, S_p(\mathcal{H}))}   \\
& = \int_{\Omega^M} \left\|\pi_i\otimes \text{id} (f)(\omega_1,\cdots, \omega_M)\right\|_p^p d \mu^M(\omega_1,\cdots, \omega_M)\\
& = \int_{\Omega^M} \left\|f(\omega_i)\right\|_p^p d \mu^M(\omega_i)\\
& = \int_{\Omega} \left\|f(\omega)\right\|_p^p d \mu(\omega)\\
& = \left\|f\right\|^p_{L_p(\Omega, S_p(\mathcal{H}))}.
\end{align}
Furthermore, by convexity of $S_p(\mathcal{H})$ norm, 
\begin{align}
&\left\|E(f)\right\|_{L_p(\Omega^M, S_p(\mathcal{H}))} \\
&= \left\|\int_{\Omega} f(\omega) d\mu(\omega)\right\|_p\\
& \overset{(a)}{\leq} \int_{\Omega} \left\|f(\omega)\right\|_p d\mu(\omega)\\
& \overset{(b)}{\leq} \left(\int_{\Omega} \left\|f(\omega)\right\|^p_p d\mu(\omega)\right)^{\sfrac{1}{p}}\\
& = \left\|f\right\|_{L_p(\Omega, S_p(\mathcal{H}))}.
\end{align}
where in (a) we use Jensen's inequality, and in (b) we use H\"{o}lder inequality.
Therefore, for $p=1$, using triangular inequality,
\begin{align}
&\left\|\Theta \otimes \text{id}(f)\right\|_{L_1(\Omega^M, S_1)} \\
&\leq \frac{1}{M}\sum_{i=1}^M \left\|\pi_i\otimes \text{id} (f)\right\|_{L_1(\Omega^M, S_1)} + \left\|E(f)\right\|_{L_1(\Omega^M, S_1(\mathcal{H}))}\\
&\leq 2 \left\|f\right\|_{L_1(\Omega, S_1(\mathcal{H}))},
\end{align}
and thus, 
\begin{align}
\left\|\Theta \otimes \text{id}: L_1(\Omega, S_1(\mathcal{H})) \to L_1(\Omega^M, S_1(\mathcal{H})) \right\| \leq 2.\label{p1}
\end{align}

Let $\hat{f}_i = \pi_i(f)-E(f)$, thus
\begin{align}
\Theta \otimes \text{id}(f) = \frac{1}{M}\sum_{i=1}^M(\pi_i(f)-E(f)) = \frac{1}{M}\sum^M_{i=1}\hat{f}_i.
\end{align}
Using Lemma \ref{orthogonal}, for $i\neq j$, 
\begin{align}
\mathds{E}(\hat{f}_i\hat{f}_j) \leq 0.
\end{align}
Moreover, for every $i$,
\begin{align}
\left\|\hat{f}_i\right\|_{L_2(\Omega^M,S_2(\mathcal{H}))} &=\left\|\pi_i(f)- E(f)\right\|_{L_2(\Omega^M,S_2(\mathcal{H}))} \\
&= \left\|f - E(f)\right\|_{L_2(\Omega,S_2(\mathcal{H}))}\\
& \leq \left\|f\right\|_{L_2(\Omega,S_2(\mathcal{H}))}.
\end{align}
where the last inequality can be proved by direct calculation.
Thus, 
\begin{align}
\left\|\Theta\otimes\text{id}(f)\right\|^2_{L_2(\Omega^M, S_2(\mathcal{H}))} &= \mathds{E}_{\mu^M}\left\|\frac{1}{M}\sum_{i=1}^{M}\hat{f}_i\right\|_2^2\\
&= \mathds{E}_{\mu^M}\left(\frac{1}{M^2}\sum_{i,j=1}^M \hat{f}_i\hat{f}_j\right)\\
& \leq \frac{1}{M^2}\sum_{i=1}^M \mathds{E}_{\mu^M}\left(\hat{f}_i\hat{f}_i\right)\\
& \leq \frac{1}{M^2}\sum_{i=1}^M \left\|\hat{f}_i\right\|^2_{L_2(\Omega^M,S_2(\mathcal{H}))}\\
& \leq \frac{1}{M}\left\|f\right\|^2_{L_2(\Omega,S_2(\mathcal{H}))}.
\end{align}
Therefore, for $p = 2$,
\begin{align}
\left\|\Theta \otimes \text{id}: L_2(\Omega, S_2(\mathcal{H})) \to L_2(\Omega^M, S_2(\mathcal{H})) \right\| \leq \frac{1}{\sqrt{M}}. \label{p2}
\end{align}
Having \eqref{p1} and \eqref{p2}, we apply the interpolation for $1\leq p \leq 2$ with $\theta = \frac{2(p-1)}{p} \in [0,1]$ to complete the proof.

\end{proof}
\begin{lemm}\label{orthogonal}
Suppose $f$ is positive semidefinite on $\Omega$. Let $\hat{f}_i= \pi_i(f)-E(f)$ as defined in the proof of Lemma \ref{interpolation}. For $i\neq j$
\begin{align}
\mathds{E}_{\mu^M}(\hat{f}_i\hat{f}_j) \leq 0.
\end{align}

\end{lemm}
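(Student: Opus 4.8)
The plan is to reduce the asserted operator inequality to the single elementary identity $\sum_{\omega\in\Omega}\big(f(\omega)-E(f)\big)=0$. Write $N:=|\Omega|$ and recall, since $\mu$ is the uniform measure, that $E(f)$ is the constant function on $\Omega^M$ with value $\bar f := \frac1N\sum_{\omega\in\Omega}f(\omega)$. Because $f$ is positive semidefinite at every point, each $f(\omega)$ and the average $\bar f$ are Hermitian, hence so is $A_\omega := f(\omega)-\bar f$; moreover $\hat f_i(\omega_1,\dots,\omega_M)=A_{\omega_i}$, so $\hat f_i\hat f_j$ is the operator-valued function $(\omega_1,\dots,\omega_M)\mapsto A_{\omega_i}A_{\omega_j}$.

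First I would identify the joint law of $(\omega_i,\omega_j)$ under $\mu^M$ for $i\neq j$. Since $\mu^M$ is the uniform distribution on $M$-tuples of pairwise distinct elements of $\Omega$ (with $2\le M\le N$), a short counting argument—each ordered pair of distinct elements extends to the same number of admissible $M$-tuples—shows that $(\omega_i,\omega_j)$ is uniformly distributed over the $N(N-1)$ ordered pairs of distinct elements. Hence
\begin{align}
\mathds{E}_{\mu^M}\big(\hat f_i\hat f_j\big)=\frac{1}{N(N-1)}\sum_{\substack{\omega,\omega'\in\Omega\\ \omega\neq\omega'}}A_{\omega}A_{\omega'}.
\end{align}
Then I would use $\sum_{\omega\in\Omega}A_\omega=\sum_{\omega}f(\omega)-N\bar f=0$: multiplying this zero operator by itself and expanding (legitimate without any commutativity assumption, since the factor being squared is literally the zero operator),
\begin{align}
0=\Big(\sum_{\omega}A_\omega\Big)\Big(\sum_{\omega'}A_{\omega'}\Big)=\sum_{\substack{\omega,\omega'\in\Omega\\ \omega\neq\omega'}}A_{\omega}A_{\omega'}+\sum_{\omega}A_\omega^2,
\end{align}
so that $\sum_{\omega\neq\omega'}A_\omega A_{\omega'}=-\sum_{\omega}A_\omega^2\leq 0$, the last inequality because each $A_\omega$ is Hermitian and therefore $A_\omega^2\geq 0$. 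Combining the two displays yields $\mathds{E}_{\mu^M}(\hat f_i\hat f_j)\leq 0$.

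I do not anticipate a genuine obstacle: the computation is short once the two ingredients—uniformity of the pair marginal and the vanishing of $\sum_\omega A_\omega$—are in place. The only points needing care are to keep the operator products in the correct order throughout, so that the expansion of $\big(\sum_\omega A_\omega\big)^2$ is unambiguous, and to note that positive semidefiniteness of $f$ is used precisely to ensure that $A_\omega=f(\omega)-\bar f$ is Hermitian, which is what makes $A_\omega^2\geq 0$.
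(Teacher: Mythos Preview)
Your proof is correct and is essentially the same argument as the paper's: both rest on the identity $\sum_{\omega,\omega'} A_\omega A_{\omega'}=(\sum_\omega A_\omega)^2=0$ together with the nonnegativity of the diagonal $\sum_\omega A_\omega^2$. The paper phrases this as a comparison between sampling without replacement ($\mu^M$) and the auxiliary with-replacement measure $\mu'$, whereas you compute the pair marginal directly and expand; your version is a bit cleaner and makes the role of positive semidefiniteness (Hermiticity of $A_\omega$, hence $A_\omega^2\ge 0$) more explicit.
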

\begin{proof}
By the setting of probability space $L_{\infty}(\Omega^M, \mu^m)$, $\omega_i$ and $\omega_j$ can be seen as uniformly choosing twe elements orderly from $\Omega$, drawing without replacement. 
We first consider a different scenario: Independently choose $\omega_i$ and $\omega_j$ uniformly from $\Omega$. We denote the expectation in this scenario $\mathds{E}_{\mu'}(\cdot)$.
In fact, 
\begin{align}
&\mathds{E}_{\mu'}(\hat{f}_i\hat{f}_j)\\
&= \mathds{E}_{\mu'}((f(\omega_i)-E(f))(f(\omega_j)-E(f)))\\
& = \int_{\omega_i \neq \omega_j}|(f(\omega_i)-E(f))(f(\omega_j)-E(f))| d\mu_i d\mu_j \\
&\quad + \int_{\omega_i = \omega_j}|(f(\omega_i)-E(f))(f(\omega_j)-E(f))|d\mu_i d\mu_j\\
& = c\mathds{E}_{\mu^M}(\hat{f}_i\hat{f}_j) \\
&\quad+\int_{\omega_i = \omega_j}|(f(\omega_i)-E(f))(f(\omega_j)-E(f))|d\mu_i d\mu_j,
\end{align}
for some $0<c\leq1$. Note that the second term of the last equation is $0$ if $\Omega$ is continuously distributed and no smaller than $0$ if $\Omega$ isn't.
Furthermore, 
\begin{align}
\mathds{E}_{\mu'}(\hat{f}_i\hat{f}_j) = \mathds{E}_{\mu}(\hat{f}_i)\mathds{E}_{\mu}(\hat{f}_j) = 0  
\end{align}
Using above two equations and the fact that $f$ positive semi-definite, we finish the proof. 
\end{proof}

\begin{proof}[Proof of Theorem~\ref{scdirect}]
Introduce 
\begin{align}
\Breve{\rho}_{\Omega B^n} = \sum_{x^n\in T^n_p} \mathds{1}_{A_{x^n}}\otimes \rho_{x^n} \in L_{\infty}(\Omega, B(\mathcal{H})^{\otimes n})
\end{align}
Denote $\mathcal{H}\equiv \mathcal{H}_B$. For $x^n = x_1\cdots x_n \in \mathcal{X}^n$, $\rho_{x^n} := \rho_{x_1}\otimes \cdots \otimes \rho_{x_n} \in \mathcal{B}(\mathcal{H})^{\otimes n}$. For all $x\in \mathcal{X}$, let $n_x := np_X(x)$. By definition, the Augustine information $\Breve{I}^*_{\alpha}(X:B)_{\rho}$ can be written as 
\begin{align}
&\Breve{I}^*_{\alpha}(X:B)_{\rho} \\
&= \inf_{\sigma\in \mathcal{S(\mathcal{H})}} \frac{\alpha}{\alpha-1}\sum_{x\in\mathcal{X}}\frac{n_x}{n} \log \left\|\sigma^{-\frac{1}{2\alpha'}}\rho_x \sigma^{-\frac{1}{2\alpha'}}\right\|_{\alpha}\\
& = \inf_{\sigma\in \mathcal{S(\mathcal{H})}} \frac{1}{n}\cdot\frac{\alpha}{\alpha-1} \log \left\|(\sigma^{\otimes n})^{-\frac{1}{2\alpha'}}\rho_{x^n}(\sigma^{\otimes n})^{-\frac{1}{2\alpha'}}\right\|_{\alpha}
\end{align}
for any $x_n \in T^n_p$. Therefore, we have
\begin{align}
&\Breve{I}^*_{\alpha}(X:B)_{\rho} \\
&= \inf_{\sigma\in \mathcal{S(\mathcal{H})}}\frac{1}{n}\cdot\frac{\alpha}{\alpha-1}\left\|\sigma_{\Omega B_n}^{-\frac{1}{2\alpha'}}\Breve{\rho}_{\Omega B^n}\sigma_{\Omega B_n}^{-\frac{1}{2\alpha'}}\right\|_{L_{\alpha}(\Omega, S_{\alpha}(\mathcal{H}^{\otimes n}))},
\end{align}
where $\sigma_{\Omega B^n} = \mathds{1}_{\Omega} \otimes \sigma^{\otimes n}_B$. In other words, 
\begin{align}
\e^{n \frac{\alpha-1}{\alpha}\Breve{I}^*_{\alpha}(X:B)_{\rho}} = \inf_{\sigma\in\mathcal{S}(\mathcal{H})}\left\|\sigma_{\Omega B_n}^{-\frac{1}{2\alpha'}}\Breve{\rho}_{\Omega B^n}\sigma_{\Omega B_n}^{-\frac{1}{2\alpha'}}\right\|_{L_{\alpha}(\Omega, S_{\alpha}(\mathcal{H}^{\otimes n}))}.\label{IBreve}
\end{align}
Let $M = |\Breve{\mathcal{C}}^n|$ and using the construction of Lemma~\ref{interpolation}, 
\begin{align}
\mathds{E}_{\Breve{\mathcal{C}}^n}\left\|\rho^{\Breve{\mathcal{C}}^n}_{B^n}-\Breve{\rho}_{B^n}\right\|_1 = \left\|\Theta(\Breve{\rho}_{\Omega B^n})\right\|_{L_1(\Omega^M, S_1(\mathcal{H}^{\otimes n}))}
\end{align}
Then, for any $\sigma \in \mathcal{S}(\mathcal{H})$,
\begin{align}
&\left\|\Theta(\Breve{\rho}_{\Omega B^n})\right\|_{L_1(\Omega^M, S_1(\mathcal{H}^{\otimes n}))}\\
&\leq \left\|\sigma_{ B^n}^{-\frac{1}{2\alpha'}}\Theta(\Breve{\rho}_{\Omega B^n})\sigma_{ B^n}^{-\frac{1}{2\alpha'}}\right\|_{L_{\alpha}(\Omega^M, S_{\alpha}(\mathcal{H}^{\otimes n}))}\\
& = \left\|\Theta\left(\sigma_{\Omega B^n}^{-\frac{1}{2\alpha'}}\Breve{\rho}_{\Omega B^n}\sigma_{\Omega B^n}^{-\frac{1}{2\alpha'}}\right)\right\|_{L_{\alpha}(\Omega^M, S_{\alpha}(\mathcal{H}^{\otimes n}))}\\
& \leq \left\|\Theta: L_{\alpha}(\Omega, S_{\alpha}(\mathcal{H}^{\otimes n})) \to L_{\alpha}(\Omega^M, S_{\alpha}(\mathcal{H}^{\otimes n})) \right\|\\
&\quad\cdot \left\|\sigma_{\Omega B^n}^{-\frac{1}{2\alpha'}}\Breve{\rho}_{\Omega B^n}\sigma_{\Omega B^n}^{-\frac{1}{2\alpha'}}\right\|_{L_{\alpha}(\Omega, S_{\alpha}(\mathcal{H}^{\otimes n}))}
\end{align}
where in the first inequality we use H\"older inequality $\left\|AXB\right\|_1 \leq \|A\|_{2\alpha'}\|X\|_{\alpha}\|B\|_{2\alpha'}$ with
\begin{align}
&\left\|\sigma_{ B^n}^{-\frac{1}{2\alpha'}}\right\|_{L_{2\alpha'}(\Omega^M, S_{2\alpha'}(\mathcal{H}^{\otimes n}))} \\
&= \left(\int_{\Omega^M}\left\|\sigma_{ B^n}^{-\frac{1}{2\alpha'}}\right\|_{2\alpha'}^{2\alpha'}d\mu^M\right)^{\frac{1}{2\alpha'}} = 1.
\end{align}
Using Lemma~\ref{interpolation} and equation~\eqref{IBreve}, we obtain the following result by taking infimum over $\rho \in \mathcal{S}(\mathcal{H})$:
\begin{align}
&\mathds{E}_{\Breve{\mathcal{C}}^n}\left\|\rho^{\Breve{\mathcal{C}}^n}_{B^n}- \Breve{\rho}_{B^n}\right\|_1 \\
&\leq 2^{\frac{2}{\alpha}-1}M^{\frac{1-\alpha}{\alpha}} \inf_{\sigma\in\mathcal{S}(\mathcal{H})}\left\|\sigma_{\Omega B^n}^{-\frac{1}{2\alpha'}}\Breve{\rho}_{\Omega B^n}\sigma_{\Omega B^n}^{-\frac{1}{2\alpha'}}\right\|_{L_{\alpha}(\Omega, S_{\alpha}(\mathcal{H}^{\otimes n}))}\\
& = 2^{\frac{2}{\alpha}-1}M^{\frac{1-\alpha}{\alpha}} \e^{n \cdot \frac{\alpha-1}{\alpha}\Breve{I}_{\alpha}^*(X:B)_{\rho}}\\
& \leq 2\e^{-n\frac{1-\alpha}{\alpha}(\Breve{I}_{\alpha}^*(X:B)_{\rho}-R)},
\end{align}
where $R = \frac{1}{n}\log|\mathcal{C}^n|$ and $\alpha\in (1,2)$, 
and the proof is done.
\end{proof}

\subsection{Proof of Theorem~\ref{scconverse}} \label{sec:proof_scconverse}
Let $M = |\Breve{\mathcal{C}}^n|$ and $R = \frac{1}{n}\log M$. The process of the proof follows the proof of Proposition~8 in \cite{CG22}, with part of derivation being different. 
\begin{align}
&d_{\textnormal{PA}}(\Breve{\rho}_{X^nB^n}, R)\\
&\overset{(a)}{\geq} \Tr[(\rho^{\Breve{\mathcal{C}}^n}_{B^n}-\Breve{\rho}_{B^n})(\rho^{\Breve{\mathcal{C}}^n}_{B^n}+\Breve{\rho}_{B^n})^{-\sfrac{1}{2}}\rho^{\Breve{\mathcal{C}}^n}_{B^n}(\rho^{\Breve{\mathcal{C}}^n}_{B^n}+\Breve{\rho}_{B^n})^{-\sfrac{1}{2}}]\\
&\overset{(b)}{\geq}1-2\mathds{E}_{\Breve{\mathcal{C}}^n} \frac{1}{M}\sum_{x^n\in \Breve{\mathcal{C}}^n}\Tr\big[({\rho}^{x^n}_{B^n})^{1-s}\big(\sum_{\substack{\Bar{x}^n\in\Breve{\mathcal{C}}^n\\ \Bar{x}^n\neq x^n}}{\rho}^{\Bar{x}^n}_{B^n}+M\Breve{\rho}_{B^n}\big)^s\big]\\
& \overset{(c)}{\geq} 1- 4M^s \sum_{x^n\in T^n_p}\frac{1}{|T^n_p|}\Tr\left[({\rho}^{x^n}_{B^n})^{1-s}\left(\Breve{\rho}_{B^n}\right)^s\right].\label{eq:scconve}
\end{align}
where $M = |\Breve{\mathcal{C}}^n|$ and $0<s<1$.
Here (a) and (b) follow similarly from the proof of \cite[Theorem~5]{CG22}, and (c) follows from that
\begin{align}
 &\mathds{E}_{\Breve{\mathcal{C}}^n} \frac{1}{M}\sum_{x^n\in \Breve{\mathcal{C}}^n}\Tr\big[({\rho}^{x^n}_{B^n})^{1-s}\big(\sum_{\substack{\Bar{x}^n\in\Breve{\mathcal{C}}^n\\ \Bar{x}^n\neq x^n}}{\rho}^{\Bar{x}^n}_{B^n}+M\Breve{\rho}_{B^n}\big)^s\big]\\
 & = \frac{1}{M}\sum_{x^n\in T^n_p}\mathds{E}_{\Breve{\mathcal{C}}^n|x^n}\mathds{E}_{x^n}\mathds{1}_{x^n\in\Breve{\mathcal{C}}^n}\\ &\quad\cdot\Tr\big[({\rho}^{x^n}_{B^n})^{1-s}\big(\sum_{\substack{\Bar{x}^n\in\Breve{\mathcal{C}}^n\\ \Bar{x}^n\neq x^n}}{\rho}^{\Bar{x}^n}_{B^n}+M\Breve{\rho}_{B^n}\big)^s\big]\\
  & = \frac{1}{M}\sum_{x^n\in T^n_p}\mathds{E}_{\Breve{\mathcal{C}}^n|(x^n\in \Breve{\mathcal{C}}^n)}\frac{M}{|T^n_p|}\\
  &\quad\cdot\Tr\big[({\rho}^{x^n}_{B^n})^{1-s}\big(\sum_{\substack{\Bar{x}^n\in\Breve{\mathcal{C}}^n\\ \Bar{x}^n\neq x^n}}{\rho}^{\Bar{x}^n}_{B^n}+M\Breve{\rho}_{B^n}\big)^s\big]\\
 & \overset{(a)}{\leq}  \frac{1}{M}\sum_{x^n\in T^n_p}\frac{M}{|T^n_p|}\\
 &\quad\cdot\Tr\big[({\rho}^{x^n}_{B^n})^{1-s}\big(\mathds{E}_{\Breve{\mathcal{C}}^n|(x^n\in \Breve{\mathcal{C}}^n)}\sum_{\substack{\Bar{x}^n\in\Breve{\mathcal{C}}^n\\ \Bar{x}^n\neq x^n}}{\rho}^{\Bar{x}^n}_{B^n}+M\Breve{\rho}_{B^n}\big)^s\big]\\
 & =  \frac{1}{M}\sum_{x^n\in T^n_p}\frac{M}{|T^n_p|}\\
 &\quad\cdot\Tr\bigg[({\rho}^{x^n}_{B^n})^{1-s}\bigg(\frac{M-1}{|T^n_p|-1}\sum_{\substack{\Bar{x}^n\in T^n_p\\ \Bar{x}^n\neq x^n}}{\rho}^{\Bar{x}^n}_{B^n}+M\Breve{\rho}_{B^n}\bigg)^s\bigg]\\
 & \overset{(b)}{\leq}   \frac{1}{M}\sum_{x^n\in T^n_p}\frac{M}{|T^n_p|}\Tr\left[({\rho}^{x^n}_{B^n})^{1-s}\left(2M\Breve{\rho}_{B^n}\right)^s\right]\\
& = (2M)^s \sum_{x^n\in T^n_p}\frac{1}{|T^n_p|}\Tr\left[({\rho}^{x^n}_{B^n})^{1-s}\left(\Breve{\rho}_{B^n}\right)^s\right],
\end{align}
where $\mathds{E}_{\Breve{\mathcal{C}}^n|x^n}$ (or $\mathds{E}_{\Breve{\mathcal{C}}^n|x^n\in \Breve{\mathcal{C}}^n}$) denotes the expectation of codebook when $x^n$ is determined to be in the codebook or not (or when $x^n \in \Breve{\mathcal{C}}^n$), and $\mathds{E}_{x^n}$ is the expectation of $x^n$ being in the codebook. Inequality (a) follows from Jensen's inequality for a concave function. In (b) we use the fact that $\frac{M}{|T^n_p|} > \frac{M-1}{|T^n_p|-1}$ and operator monotonicity of $(\cdot)^s$.

Note that 
\begin{align}
\Breve{\rho}_{B^n} &= \sum_{x^n\in \mathcal{X}^n}\frac{\mathds{1}_{x^n \in T^n_p}p_X^{\otimes n}(x^n)\rho^{x^n}_{B^n}}{p_X^{\otimes n}(T^n_p)}\\
& \leq \sum_{x^n\in \mathcal{X}^n}\frac{p_X^{\otimes n}(x^n)\rho^{x^n}_{B^n}}{p_X^{\otimes n}(T^n_p)}\\
& = \frac{\rho^{\otimes n}_{B}}{p_X^{\otimes n}(T^n_p)}. \label{eq:sctype}
\end{align}
By \eqref{eq:scconve} and \eqref{eq:sctype}, we have
\begin{align}
&d_{\textnormal{PA}}(\Breve{\rho}_{X^nB^n}, R) \\
&\geq 1- 4M^s \sum_{x^n\in T^n_p}\frac{1}{|T^n_p|}\Tr\left[({\rho}^{x^n}_{B^n})^{1-s}\left({\rho}^{\otimes n}_{B}\right)^s\right] (p_X^{\otimes n}(T^n_p))^{-s}\\
& = 1- 4M^{\frac{1-\alpha}{\alpha}} \e^{n\frac{\alpha-1}{\alpha}\Breve{I}^{\uparrow}_{2-\sfrac{1}{\alpha}}(X:B)_{\rho}}(p_X^{\otimes n}(T^n_p))^{\frac{\alpha-1}{\alpha}}\\
& \geq  1- 4(n+1)^{|\mathcal{X}|}\e^{-n\sup\limits_{\alpha\in (\sfrac12, 1)}\frac{1-\alpha}{\alpha}\left(\Breve{I}^{\uparrow}_{2-\sfrac{1}{\alpha}}(X:B)_\rho -R\right)}
\end{align}
where in the last inequality we use that the probability $p_X^{\otimes n}(T^n_p)$ is bounded by \cite{CK11}
\begin{align}
p_X^{\otimes n}(T^n_p) \geq (n+1)^{-|\mathcal{X}|}.
\end{align}
And we finish our proof of strong converse bound.
\qed

\subsection{Proof of Proposition~\ref{prop:bound}} \label{app:bound}

Fix any $\sigma_B \in \mathcal{S}(\mathcal{H}_B)$, we calculate
\begin{align}
    &H(p) - \sum_x p(x) D_\alpha^*(\rho_B^x \Vert \sigma_B ) 
    \notag
    \\
    &= - \sum_x p(x) \frac{1}{\alpha-1} \log \left[ p(x)^{\alpha-1} \e^{(\alpha-1) D_\alpha^*(\rho_B^x \Vert \sigma_B )} \right]
    \\
    &\geq -  \frac{1}{\alpha-1} \log \left[  \sum_x p(x)^{\alpha} \e^{(\alpha-1) D_\alpha^*(\rho_B^x  \Vert \sigma_B )} \right]
    \\
    &= - D_\alpha^*\left( \rho_{XB} \Vert \mathds{1}_X \otimes \sigma_B\right),
\end{align}
where the inequality follows from Jensen's inequality and the concavity of logarithm.

By minimizing over $\sigma_B \in \mathcal{S}(\mathcal{H}_B)$ concludes the proof of Proposition~\ref{prop:bound}. 
\qed


\end{document}